\newtheorem{proposition}{Proposition}
\def\E{\mathbb{E}}
\def\var{\mathbb{V}ar}
\def\cor{\mathbb{C}or}
\def\R{{\rm I\! R}}
\def\N{{\rm I\! N}}
\def\E{{\rm I\! E}}
\title{On modeling positive continuous data
	with spatio-temporal dependence}
\author{Moreno Bevilacqua \\
			 Department of Statistics\\
		Universidad de Valparaiso,
 Valparaiso, Chile\\
 and\\
		Millennium Nucleus Center \\ for the Discovery of Structures in  Complex Data,
		Chile
			   \\
	Christian Caama\~no-Carrillo\\
	Department of Statistics, \\
		Universidad del B\'io-B\'io,   Concepcion, Chile
	\\
	Carlo Gaetan\\
	Dipartimento di Scienze Ambientali, Informatica e Statistica, \\
		Universit\`a  Ca' Foscari di Venezia, 		 Venice, Italy.
	   }
\begin{document}
	\maketitle
	\newpage
\begin{abstract}
In this paper, we concentrate  on an alternative modeling strategy for positive data that exhibit  spatial or spatio-temporal dependence.
Specifically, we propose to consider  stochastic processes obtained through a  monotone transformation of  scaled version of $\chi^2$ random processes. The latter are well known in the specialized literature and originates by   summing independent copies of a squared Gaussian  process. However, their use as stochastic models and related inference has not been much considered.
Motivated by a spatio-temporal analysis of wind speed data
from a network of meteorological stations in the Netherlands, we exemplify our modeling strategy by means of a non-stationary process with Weibull marginal distributions.
For the proposed Weibull process we study the second-order and geometrical properties  and  we provide  analytic expressions for the bivariate distribution.
Since the  likelihood is intractable,  even for a relatively small data-set, we suggest adopting  the  pairwise likelihood as a tool for inference. 
Moreover, we tackle the prediction problem and we propose {to use} a  linear prediction. The effectiveness of our modeling strategy  is illustrated by analyzing  the aforementioned Netherland wind speed data 
that we integrate with a simulation study. The proposed method is implemented in the R package {\texttt{GeoModels}}.
\end{abstract}
Keywords:
Copula; Linear Prediction; Non-Gaussian data; Pairwise likelihood;  Regression model;   Wind speed data.
\newpage
\section{Introduction}

Climatology, Environmental Sciences  and Engineering, to name some fields, show  an increasing interest in the statistical  analysis of
spatial and/or   temporal  data.
In order to model the inherent uncertainty of the data,
 Gaussian random processes play a fundamental role \citep[see][for instance]{Cressie:Wikle:2011}.
Indeed,  the Gaussian random processes have to offer marginal and dependence modelling in terms of  mean and covariance functions,  methods of  inference well studied and scalable for
large dataset \citep{Heaton2018} and  optimality in the prediction \citep{Stein:1999}.

However, data collected in a range of studies
such as wind speeds  \citep{Pryor:Barthelmie:2010}, ocean surface
currents  \citep{Galanis_et_al:2012} and rainfalls
\citep{Neykov:Neytchev:Zucchini:2014}
take continuous positive values and exhibit  skewed
sampling distributions.
In this case the
Gaussian probability model becomes  unrealistic.

Transformations of a Gaussian
 process, i.e. trans-Gaussian kriging \citep{Cressie:1993},   is a general  approach  to model
this kind of data  by applying a nonlinear
transformations to the original data. In the literature the most common transformations,
the square root  and the natural logarithm, belong to the Box-Cox power transformation 
\citep[see][for instance]{Haslett:Raftery:1989,Allcroft:Glasbey:2003,Bessac:Ailliot:Monbet:2015}.
In particular, Log-Gaussian processes have been  broadly used  for the analysis  of positive dependent  data 
due to their  well known mathematical properties \citep{Oliveira_et_al:1997,DeOliveira:2006}.
For an alternative (parametric) family of transformations of Gaussian processes, the Tukey $g$-and-$h$ transformation, see \citet{Xu:Genton:2017}, \citet{Yan-Genton:2019}.

Nevertheless, it can be difficult to find an adequate non linear
transformation and some appealing properties of the
Gaussian process  may not be inherited by the transformed process \citep{Wallin:Bolin:2015}.

Another possibility is to resort to Gaussian copulas. Copula theory \citep{Joe:2014}  allows joint distributions to be constructed from specified marginal continuous distributions for positive data. 
The role of the copula is to describe the  spatio-temporal dependence structure between random variables without information on the marginal distributions.
Even though which copula model to use for a given analysis is not generally known a priori, the copula based on the multivariate Gaussian distribution
\citep{Kazianka:Pilz:2010,Masarotto:Varin:2012,BG:2014} has gained a general consensus  since the definition of the multivariate  dependence  relies again on the specification of the pairwise dependence, i.e. on the   covariance function.

Actually the two aforementioned  approach are strongly related since    monotone transformations of a Gaussian process share the same copula model. As we will see in our real data example, the kind of   dependence described by the Gaussian copula could be   {restrictive or unsuitable}.
In fact the  Gaussian copula 
expresses a symmetrical dependence  (see Section \ref{sec:models}), 
i.e. high values  exhibit a  spatial/temporal dependence similar to  low ones.
Copula-based model using symmetrical dependence is still used in a  recent paper \citep{Tang:Wang:Sun:Hering:2019} on spatio-temporal modelling wind speed data.

Concluding  this short review we mention  that
\cite{Wallin:Bolin:2015} proposed recently non-Gaussian processes derived
from stochastic partial differential equations. 
Nevertheless, their method is
restricted to the spatial Mat\'ern covariance model and its statistical
properties are much less understood that the Gaussian process.

In this paper, we shall look at  processes that are derived by  Gaussian processes but differently from the  trans-Gaussian random processes and the copula models we do not consider just one copy of the Gaussian process.
We suggest to model positive continuous data by  transforming $\chi^2$ processes  \citep{Adler:1981,MA2010}
 i.e.   a sum of squared of independent copies of a  standard Gaussian  process. Even though probabilistic properties of a sum of squared Gaussian processes have been  studied
 	several years ago, less attention has been paid to use this for statistical modelling   of dependent positive data.
We are convinced that the Gaussian processes offer an incomparable tool case for those who want to model the dependence between observations. However, we aim to overcome some aforementioned restrictions.

Motivated by a spatio-temporal analysis of  daily wind speed data 
from a network of meteorological stations in the Netherlands,
we  {exemplify} our construction
by proposing a non-stationary spatio-temporal process with asymmetric dependence 
and    Weibull marginal   distribution even though  other stochastic processes with different marginal distributions could be studied starting  from transformations of  $\chi^2$ processes. 
In fact,  in scientific literature a variety of probability distribution has been suggested to describe wind speed  distributions and the Weibull model constitutes one of the most widely accepted  \citep[see][for a review]{Carta:Ramirez:Velazquez:2009}.

The proposed Weibull  process is parametrized in such a way 
that both  regression and dependence analysis can be jointly performed.   
Additionally the  process  inherits the geometrical properties of the   underlying Gaussian process.
This implies that  mean-square continuity and differentiability,  as in the Gaussian processes,    can be modeled using  suitable parametric
correlation models such as the  Mat{\'e}rn \citep{Stein:1999} or the Generalized Wendland \citep{Bevilacqua:20189} models,   in the spatial case.

It must be said that it is the difficult to evaluate  the multivariate density for the  proposed model and this fact  prevents the inference  based on the  full  likelihood and the derivation of the analytical form of the predictor that minimizes  the mean square prediction error. 
For this reason,  we propose the use of a weighted version of the pairwise likelihood
\citep{Lindsay:1988,Varin:Reid:Firth:2011}
for estimating the unknown parameters.
Moreover a   linear and unbiased predictor  is proposed following the approach detailed in \citet{DeOliveira:2014}.

 {In addition, we study a specific example where both the multivariate density and the optimal predictor have  an explicit closed-form
(see Section 5).
For this  example, we perform a simulation study with the goal of investigating  the relative efficiency
of the  maximum weighted  pairwise  likelihood method with respect to the maximum likelihood method
and the relative efficiency  of the proposed linear  predictor with respect to the  optimal predictor.
}
Simulation, estimation, and prediction of the Weibull process  are implemented in a R package
	\texttt{GeoModels} \citep{Bevilacqua:2018aa}.

The remainder of the paper is organized as follows. In Section \ref{sec:models}
we introduce the random processes and we describe their features.
In Section \ref{sec:Weibull} we concentrate on a random process with Weibull marginal distributions.
Section \ref{sec:est-pred} starts with a short description of  the estimation method  and ends with tackling  the prediction problem.
In Section \ref{sec:numerical-results} we report the numerical results of a  simulation study and in Section \ref{sec:real-data} we apply our method to the  daily wind speed data measurements 
from a network of meteorological stations in the Netherlands using the Log-Gaussian process as the benchmark.
Finally some concluding remarks are consigned to Section \ref{sec:conclusions}.

\section{Scaled $\chi^2$ random processes}\label{sec:models}

\subsection{Definition}
We start by considering a `parent'  Gaussian random process $Z:=\{Z(s), s \in S  \}$, where $s$ represents  a location in the domain $S$.  Spatial ($S\subseteq \R^d$) or spatio-temporal examples ($S\subseteq \R^d\times\R_+$) will be considered indifferently.
We also assume that $Z$ is stationary with  zero mean, unit variance and correlation function $\rho(h):=\cor(Z(s+h),Z(s))$ where $s+h\in S$.
Let $Z_1,\ldots,Z_m$ be
$m=1,2,\ldots$ independent copies of $Z$ and define the  random {process} $X_m:=\{X_m(s), s \in S \}$ as
\begin{equation}\label{gg}
X_{{m}}({s}):=\sum_{k=1}^{{m}} Z_k({s})^2/{m}.
\end{equation}
The stationary process $X_{{m}}$ is a  scaled version of a $\chi^2$  random process \citep{Adler:1981,MA2010}   with
marginal distribution  $\mathrm{Gamma}(m/2,m/2)$ where
the  pairs ${m}/2,m/2$ are the  shape  and rate parameters. By definition, $\E(X_{{m}}({s}))=1$ and  $\var(X_{{m}}({s}))=2/{m}$ for all $s$.

The analytical expressions of  the multivariate density of a vector of $n$ observations $X_m(s_1)=x_1,\ldots,X_m(s_n)=x_n$   can be derived only in some special cases \citep{krishnamoorthy1951,Royen:2004}.
An interesting example is made up for  $s_1<s_2<\ldots< s_n$ locations on $S=\mathbb{R}$ and for a Gaussian process $Z$ with exponential covariance function. In this case the multivariate density can be derived as 
\begin{eqnarray}\label{gammafd1}
f_{X_m}(x_1,\ldots,x_n)&=&\frac{m^{{m}/{2}-1+n}2^{-{m}/{2}+1-n}(x_1x_n)^{{m}/{4}-{1}/{2}}}{\Gamma\left({m}/{2}\right)\prod\limits^{n-1}_{i=1}\{(1-\rho^2_{i,i+1})\rho_{i,i+1}^{{m}/{2}-1}\}}
\nonumber\\
&&\times
\exp{\left[-\frac{mx_1}{2(1-\rho^2_{1,2})}
-\frac{mx_n}{2(1-\rho^2_{n-1,n})}-\sum\limits_{i=2}^{n-1}\frac{m(1-\rho^2_{i-1,i}\rho^2_{i,i+1})x_i}{2(1-\rho^2_{i-1,i})(1-\rho^2_{i,i+1})}\right]}\nonumber\\
&&\times\prod\limits^{n-1}_{i=1}I_{{m}/{2}-1}\left(\frac{m\rho_{i,i+1}\sqrt{x_ix_{i+1}}}{(1-\rho^2_{i,i+1})}\right)
\end{eqnarray}
with
$\rho_{ij}:=\exp\{-|s_i-s_j|/\phi\}$, $\phi>0$ and $I_{a}(x)$  the modified Bessel function of the
first kind of order $a$.

 {The evaluation} of the bivariate densities  of a pair  of observations $X_m(s_1)$ and $X_m(s_2)$  
can be derived
irrespective of the dimension of the space  $S$ and the
correlation function \citep{VJ:1967}. 
The bivariate distribution of $X_m$   is  known as the  Kibble bivariate Gamma distribution
\citep{Kibble:1941}
with density
\begin{eqnarray}\label{eq:kibble}
f_{X_m}(x_1,x_2)&=&\frac{2^{-{m}}{m}^{{m}}(x_1x_2)^{{m}/2-1}
}{\Gamma\left({{m}}/{2}\right)(1-\rho^2)^{{m}/2}}
\left(\frac{{m}|\rho|\sqrt{x_1x_2}}{2(1-\rho^2)}\right)^{1-{m}/2}
\exp\left\{-\frac{{m}(x_1+x_2)}{2(1-\rho^2)}\right\}\nonumber
\\
&&\times\, I_{{m}/2-1}
\left( \frac{ {m}|\rho|\sqrt{x_1x_2} } {(1-\rho^2)}\right),
\end{eqnarray}
where $\rho=\rho(s_1-s_2)$.

\subsection{Dependence structure}
It is easy to show that
the correlation function of $X_m$, $\rho_{X_m}(h)$, is equal to  $\rho^2(h)$, the squared of the correlation function of the `parent' Gaussian random process.
However, a way for  looking more deeply into the dependence structure between random variables   {regardless of}  the marginal distributions is inspecting their copulas \citep{Joe:2014}.
For a $n$-variate cumulative distribution function (cdf) $F(y_1,\ldots,y_n):=\Pr(Y_1\le y_1,\ldots,Y_n\le y_n)$
with $i$-th univariate margin $F_i(y_i):=\Pr(Y_i\le y_i)$, the copula associated with $F$ is a  cdf function $C : [0, 1]^n\rightarrow [0, 1]$ with
$\mathcal{U}(0,1)$ margins that satisfies $F(y_1,\ldots,y_n)=C(F_1(y_1),\ldots,F_n(y_n)).$
If $F$ is a  {continuous cdf, with quantile functions $F_i^{-1}$, $i=1,\ldots,n$,}   then Sklar's theorem \citep{Sklar:1959} guarantees
that the cdf   {on the $n$-hypercube} %
$
C(u_1,\ldots,u_n)=F(F_1^{-1}(u_1),\ldots, F_n^{-1}(u_n))
$
is the unique choice. The corresponding copula density, obtained by differentiation, is denoted by $c(u_1,\ldots,u_n)$.

Analogously, the multivariate survival function $\overline{F}(y_1,\ldots,y_n):=\Pr(Y_1> y_1,\ldots,Y_n> y_n)$
 could be expressed using the univariate survival functions $\overline{F}_i=1-F_i$ and the survival copula
$\overline{F}(y_1,\ldots,y_n)=\overline{C}\,(\overline{F}_1(y_1),\ldots,\overline{F}_n(y_n)).$
 {The survival copula is in this way a distribution function on the $n$-hypercube
$
\overline{C}(u_1,\ldots,u_n)=\overline{F}(\overline{F}^{-1}(u_1),\ldots,\overline{F}^{-1}(u_n)).
$}

Among the copulas, the Gaussian copula is a convenient model for spatial data \citep{Masarotto:Varin:2012}  as it  offers a parametrization in terms of a correlation function.
Let $\Phi^{-1}$ denote the quantile   function of $\Phi$  the cdf of a standard Gaussian variable. The Gaussian copula with correlation matrix $R$  is defined by
$
C(u_1,\ldots,u_n)=\Phi_R(\Phi^{-1}(u_1),\ldots, \Phi^{-1}(u_n)),
$
where $\Phi_R$ denotes the joint cumulative distribution function of an $n$-variate Gaussian random vector with zero means and correlation matrix $R$.

 The Gaussian copula  is reflection symmetric, 
$
C(u_1,\ldots,u_n)=\overline{C}(u_1,\ldots,u_n)
$, that is the  probability of having all variables less than their respective $u$-th quantile is  {equal to} the probability of having all the variables greater than the complementary marginal quantile.
Such property is a potential issue for data in which  upper quantiles    {might} exhibit a different pairwise spatial dependence than lower quantiles.

 {Although copula theory uses transformations to $\mathcal{U}(0, 1)$ margins, it is better to consider $\mathcal{N}(0, 1)$ margins \citep[p. 9]{Joe:2014} for identifying the copula and for diagnostic purpose.}
In particular, the plot of the bivariate copula density can be compared with the Gaussian bivariate density and with the scatter-plot of pairs of observations on a normal scale, the \textit{normal scores}.
The bivariate copula density with $\mathcal{N}(0, 1)$ margins is given by
$$
c_{N}(z_1,z_2)=\frac{c(\Phi^{-1}(z_1),\Phi^{-1}(z_2))}{\phi(z_2)\phi(z_2)}
$$
where $\Phi(z)$ (resp. $\phi(z)$)  is the cdf  (resp. pdf) of the standardized Gaussian distribution.
Under this transform  reflection symmetry  means that the
bivariate density contour plot is symmetric to the $(z_1 , z_2 )\rightarrow (-z_1 , -z_2 )$ reflection, i.e.
$
c_{N}(z_1,z_2)=c_{N}(-z_1,-z_2)
$.

In Figure \ref{fig:copula} we  compare the contour plots of the  bivariate copula density function
entailed by (\ref{eq:kibble})
 with  the elliptical contours of the bivariate Gaussian density.
  Note that the copula for $m=1$ is the copula introduced in \citet{Bardossy:2006}.
 Sharper corners (relative to ellipse) indicate more tail dependence of $X_m$ than the Gaussian process and we notice  also reflection asymmetries.

 {Dependence among extremal events can be summarized by the  \emph{upper tail dependence coefficient} \citep{Sibuya:1960, Coles:Heffernan:Tawn:1999} that looks at the limit behavior of the
conditional probability  of two random variables $Y_1$ and $Y_2$ 
 $$
 \tau:= \lim_{u \rightarrow 1^-}
 \frac{\Pr(F_1(Y_1)> u,F_2(Y_1)> u)}{\Pr(F_2(Y_1)> u)} 
= \lim_{u \rightarrow 1^-}
 \frac{1-2u+C(u,u)}{1-u}= 
 \lim_{u \rightarrow 1^-}
 \frac{\overline{C}(1-u,1-u)}{1-u}.
 $$
}
 The  value of the coefficient helps to distinguish  between asymptotic dependence and asymptotic independence of the observations  as the quantile increases. Under spatial asymptotic dependence, the likelihood of a large event happening in one location is tightly related to high values being recorded at a location nearby; the opposite is true under asymptotic independence, in which large events might be recorded at one location only and not in neighboring locations \citep{Wadsworth:Tawn:2012}.

We say that $Y_1$ and $Y_2$ are asymptotically
dependent if $\tau >0$ is positive.
The case  $\tau=0$ characterizes asymptotic independence.
Simply adapting Theorem 2.1 in \citet{hashorva2014} we can prove that the  $X_m$ is asymptotically
independent for all $m$, i.e. $\tau=0$.

\begin{figure}
\begin{center}
\hspace{-1cm}
\scalebox{1}{
\begin{tabular}{cc}
\hspace{-.3cm}
\includegraphics[width=0.45\linewidth, height=0.27\textheight]{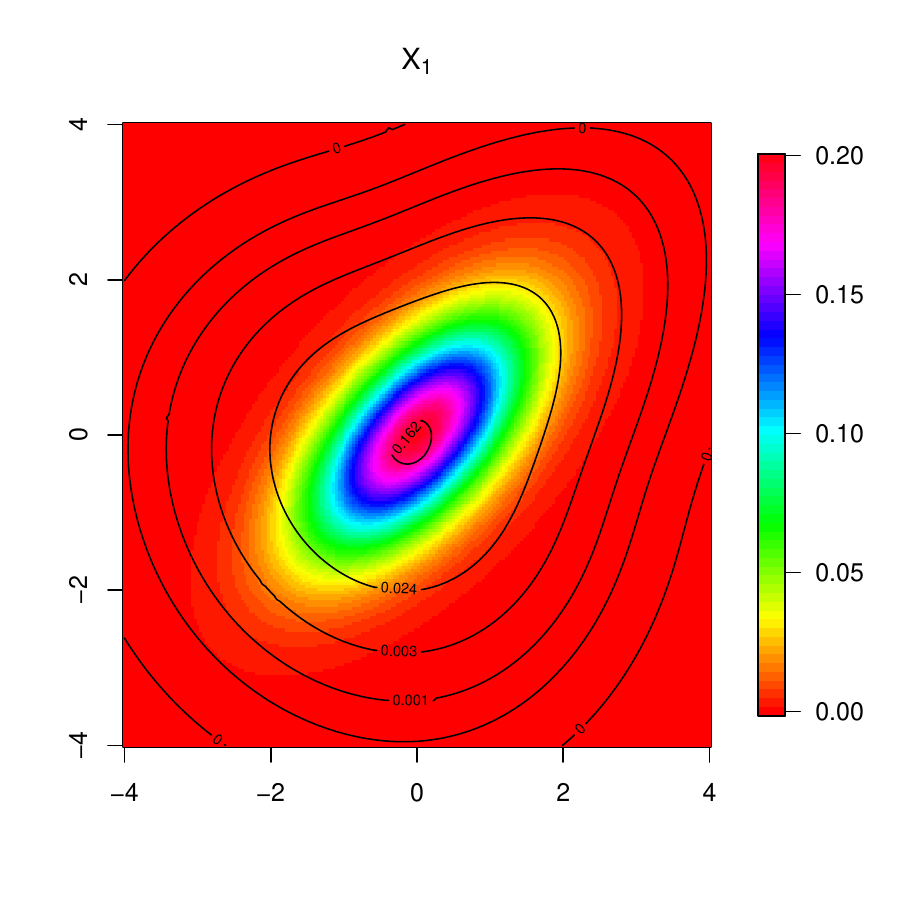}  &
\hspace{-.2cm}
\includegraphics[width=0.45\linewidth, height=0.27\textheight]{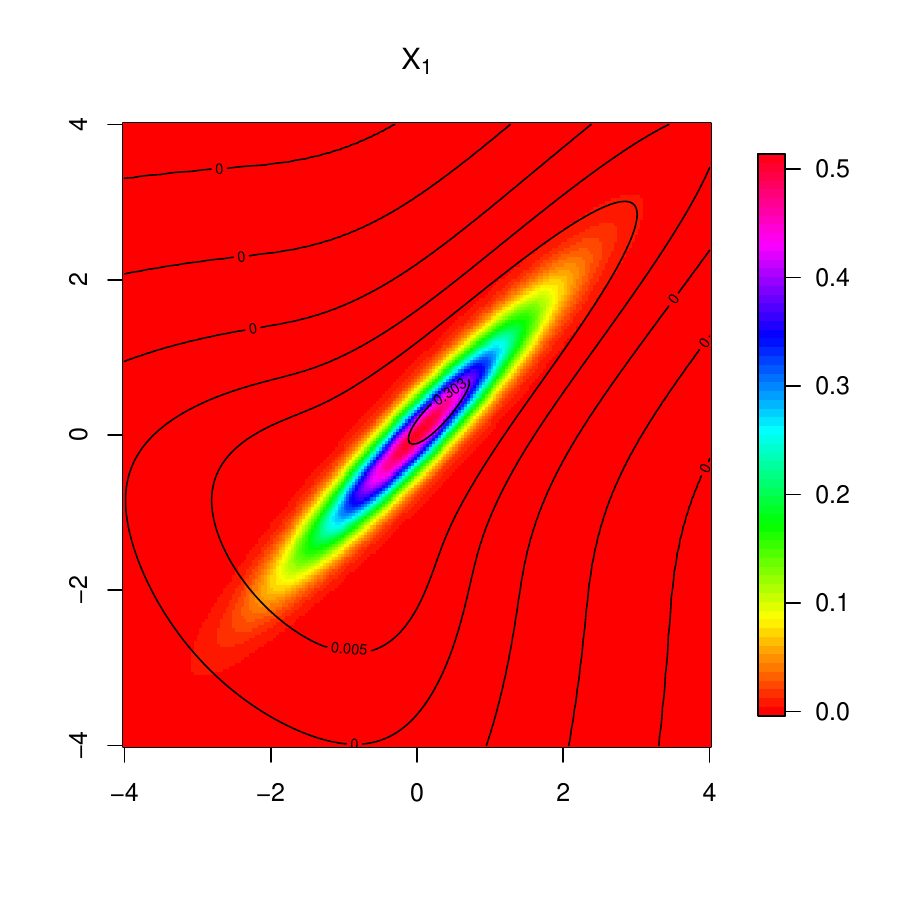}
\\
\hspace{-.3cm}
\includegraphics[width=0.45\linewidth, height=0.27\textheight]{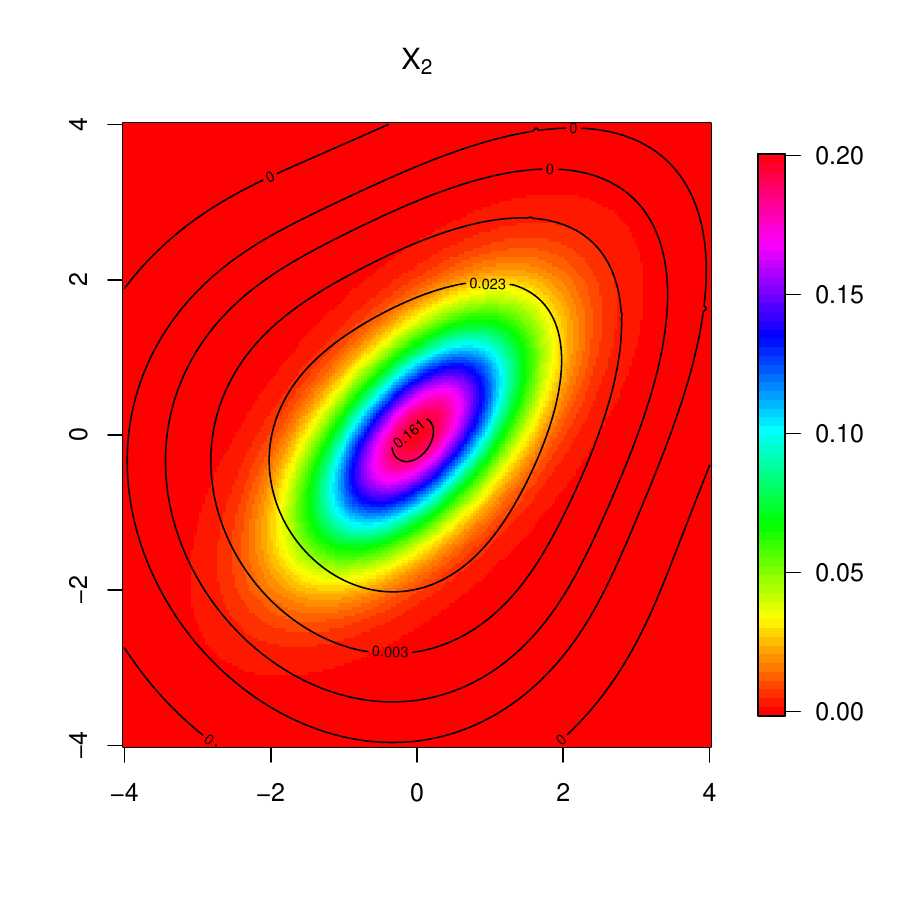}  &
\hspace{-.2cm}
\includegraphics[width=0.45\linewidth, height=0.27\textheight]{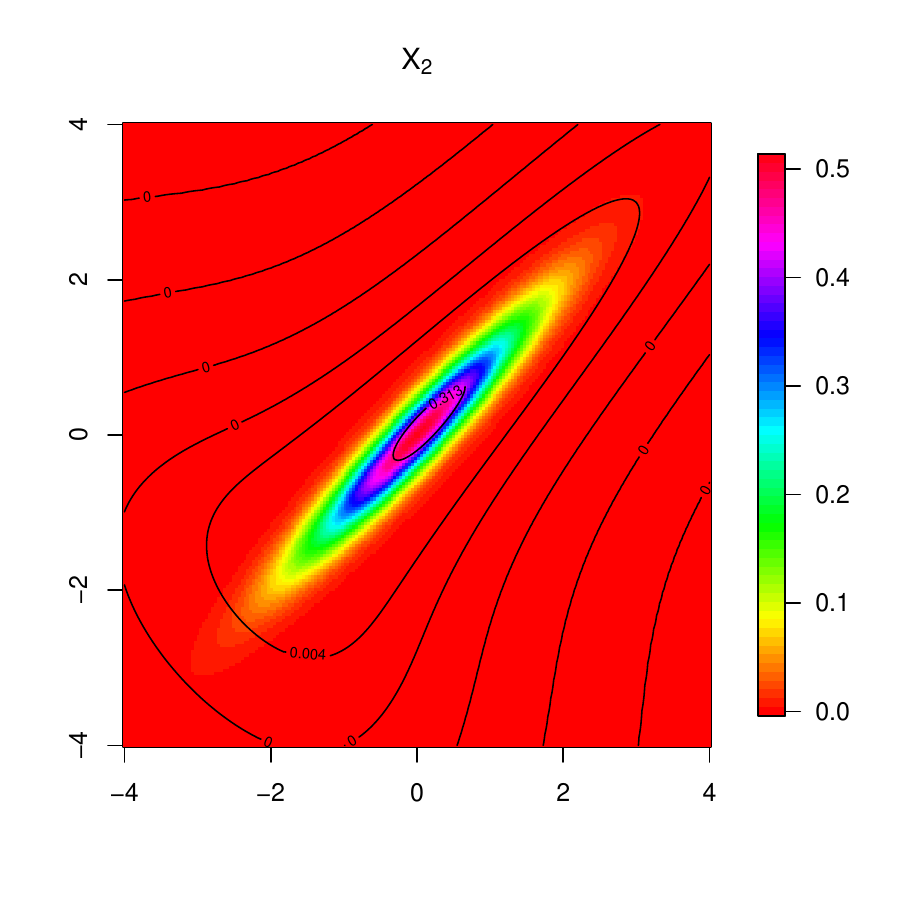}
\\
\hspace{-.3cm}
\includegraphics[width=0.45\linewidth, height=0.27\textheight]{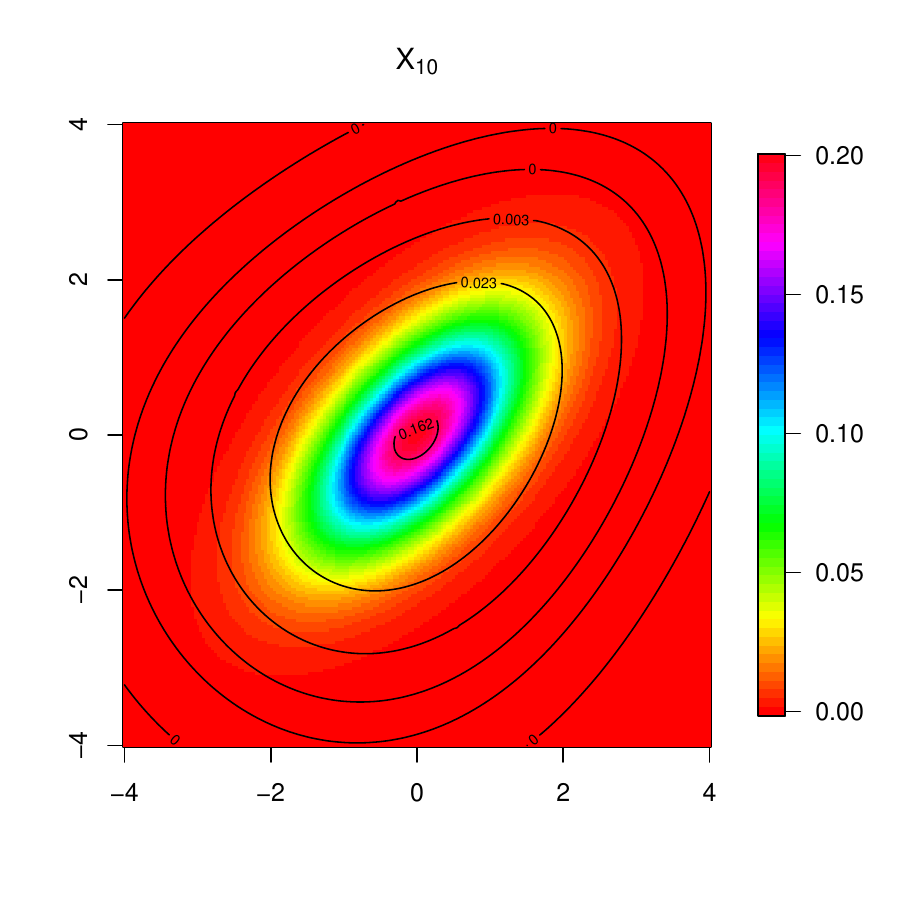}  &
\hspace{-.2cm}
\includegraphics[width=0.45\linewidth, height=0.27\textheight]{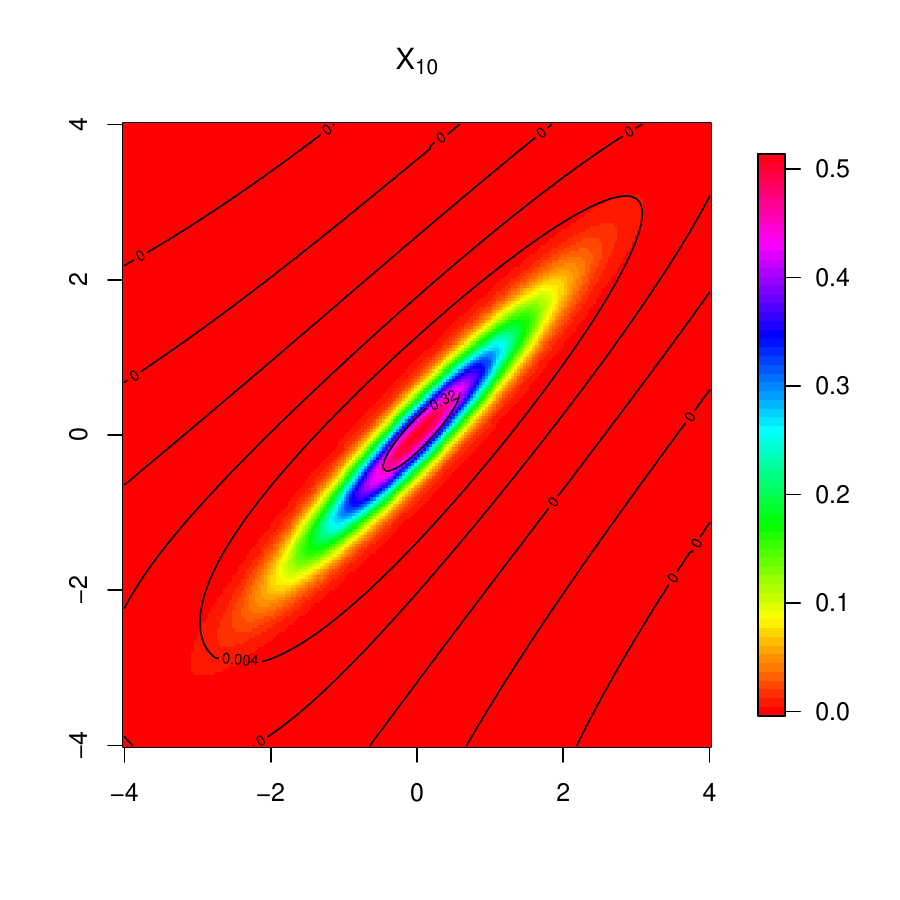}
\end{tabular}
 }
\end{center}
\caption{$X_m$ process: bivariate density contour plots for different values of $m$ and $\rho$ after transforms to $\mathcal{N}(0, 1)$ margins with $m=1,2,10$ from  top to  bottom and  $\rho=0.6,0.95$ from  left to  right.
The background image is a grid of colored pixels with colors corresponding to the values of the  standard bivariate Gaussian density with correlation $\rho$.}
\label{fig:copula}
\end{figure}

\section{A random process with Weibull marginal distribution}\label{sec:Weibull}

We focus our attention on stochastic modeling  of wind speed data. In scientific literature, a variety of probability distribution has been suggested to describe wind speed frequency distributions
\citep[see][for a review]{Carta:Ramirez:Velazquez:2009}. Among them, the Weibull distribution constitutes one of the most widely accepted distribution for wind speed and it can be derived from a physical argument.

Suppose that the  two orthogonal wind  components $(Z_1,Z_2)$
are assumed to be individually Gaussian with zero mean and independent, isotropic fluctuations.
The  distribution of   the speed  $V= \sqrt{Z_1^2+Z_2^2}$ is the
Rayleigh distribution i.e. the distribution $V^2$  is the exponential distribution \citep[pag. 417]{Johnson:Kotz:Balakrishnan:1995}.
 {The Weibull distribution can be obtained  from the  Rayleigh distribution through the power law transformation of $V$ %
that  has been shown to fit better wind speed samples due to its  flexible form induced by the additional shape parameter.}

 {Following this idea, let  $X_{{2}}$ a special case of the  rescaled $\chi^2$  random process defined in Equation (\ref{gg}), with  standard exponential   marginal distribution.
To obtain a stationary positive random {process} $W=\{W(s),s\in S\}$ with marginal Weibull distribution
we consider   the  power transformation
\begin{equation}\label{def:weibull}
{W}({s}):=
\nu(\kappa) X_{2}({s})^{1/\kappa},
\end{equation}
where $\nu(\kappa)= \Gamma^{-1}(1+1/\kappa)$  and $\kappa >0$ is a shape parameter. Note that under  this specific  parametrization
${W}({s})\sim\mbox{Weibull}( \kappa, \nu(\kappa))$,
$\E(W(s))=1$ and 
$\var(W(s))=(\Gamma\left(1+2/ \kappa\right)\nu^2(\kappa)-1)$.}
In addition, the density of a pair of observations $W(s_1)=w_1$ and $W(s_2)=w_2$
is easily obtained from
(\ref{eq:kibble}) and (\ref{def:weibull}), namely
\begin{eqnarray}\label{eq:weibull}
f_{W}(w_1,w_2)&=&\frac{\kappa^2(w_{1}w_{2})^{\kappa-1}}{\nu^{2\kappa}(\kappa)(1-\rho^2)}
\exp\left[-\frac{w_1^{\kappa}+w_2^{\kappa}}{\nu^\kappa (\kappa)(1-\rho^2)}\right]
I_{0}\left(\frac{2|\rho|(w_1w_2)^{\kappa/2}}{\nu^\kappa (\kappa)(1-\rho^2)}\right).
\end{eqnarray}

Using Proposition \ref{def:prop1} in Appendix
  we can also obtain the correlation function of $W$, namely

 \begin{equation}\label{ccc}
 \rho_{W}(h)=\frac{ \nu^{-2}(\kappa)    }{\left[\Gamma\left(1+{2}/{\kappa}\right)-\nu^{-2}(\kappa) \right]}\left[{}_2F_1\left(-{1/\kappa},-{1/\kappa};1;\rho^2(h)\right)-1\right],
 \end{equation}
where the function
$${}_pF_q(a_1,a_2,\ldots,a_p;b_1,b_2,\ldots,b_q;x):=\sum\limits_{k=0}^{\infty}
\frac{(a_1)_k,(a_2)_k,\ldots,(a_p)_k}{(b_1)_k,(b_2)_k,\ldots,(b_q)_k}\frac{x^k}{k!}\;\;\;\text{for}\;\;\;p,q=0,1,2,\ldots$$
is the generalized hypergeometric function \citep{Gradshteyn:Ryzhik:2007}
and $(a)_{k}:=  \Gamma(a+k)/\Gamma(a)$, for $k\in \N \cup \{0\} $, is
the Pochhammer symbol. Note that $\rho(h)=0$  implies pairwise independence,  as in the Gaussian case since (\ref{eq:weibull}) can be factorized in the product of two Weibull densities. Additionally, since ${}_2F_1\left(\cdot,\cdot,\cdot;0\right)=1$,
$\rho(h)=0$   implies $\rho_W(h)=0$  {that is}  if a  compactly supported correlation function \citep{Bevilacqua:20189} is used as underlying correlation model,
then also the correlation of the Weibull process is compactly supported.
 {This feature  is particularly  appealing  from the computational  point of view since  algorithms  for sparse matrices can be used
to handle the correlation matrix associated with $\rho_W$ (see Section 4.2).}

More important, it can be shown that some nice properties  such as stationarity, mean-square continuity, degrees of mean-square differentiability
and long-range dependence
can be inherited from the `parent' Gaussian process $Z$.
In particular, using the results in \citet[Section 2.4]{Stein:1999} linking  the behavior of the correlation   at the origin and the geometrical properties of the associated process, 
we can prove that
${W}$ is  mean square  continuous
if $Z$ is mean square  continuous
and it is
$k$-times mean-square differentiable if $Z$
is $k$-times mean-square differentiable.
Finally, it is trivial to see that the sample path continuity and differentiability   are inherited from the  `parent' Gaussian process.
As a consequence,  mean-square continuity and differentiability of the sample paths of the Weibull process can be modeled using suitable flexible parametric correlation functions as in the case of the Gaussian processes.

As an illustrative example, Figure \ref{fig:fcov-matern}   collects three simulations of  $W$ on a fine grid of $S=[0,1]^2$ with
 Mat{\'e}rn correlation function
$
\rho(h)=
{2^{1-\psi}}{\Gamma(\nu)}^{-1} \left ({\|h\|}/{\phi}
\right )^{\nu} {\cal K}_{\nu} \left ({\|h\|}/{\phi} \right )
,$
where $\phi,\nu>0$  and ${\cal K}_{a}$ is a modified Bessel function of the second kind of
order $a >0$.

We have considered three different parametrization for the smoothness parameter   $\nu=0.5, 1.5, 2.5$. Under this setting, the paths of the  'parent' Gaussian process is
$0, 1, 2-$times mean square differentiable,  respectively.
The values, $\phi= 0.067, 0.042, 0.034$, of the range parameter have been chosen in order to obtain  a practical range, i.e.   {the distance  at which the correlation $\rho(h)$,  equal to $0.05$
 equal to $0.2$}. 
Additionally
we have fixed  the shape  parameter of the Weibull distribution as  $\kappa=10, 3, 1$. The corresponding correlation functions $\rho_{W}(h)$ are plotted (from left to right) in the top panel of Figure  \ref{fig:fcov-matern} and the bottom panel reports the histograms of the observations.

It is apparent that the correlation $\rho_{W}(h)$ inherits  the  change of the  differentiability at the origin from  $\rho(h)$ when increasing $\nu$.
This changes have consequences on the geometrical properties  of the associated random processes. In fact the smoothness of the realizations (central panel of Figure \ref{fig:fcov-matern}) increase with $\nu$.
Note also the flexibility of the Weibull model when modeling positive data in the bottom panel 
of Figure  \ref{fig:fcov-matern}   since both  positive and negative skewness  can be achieved
with different values of $\kappa$. %

\begin{figure}
\begin{center}
\hspace{-1cm}
\begin{tabular}{ccc}
\hspace{-0.5cm}
\includegraphics[width=0.3\linewidth, height=0.2\textheight]{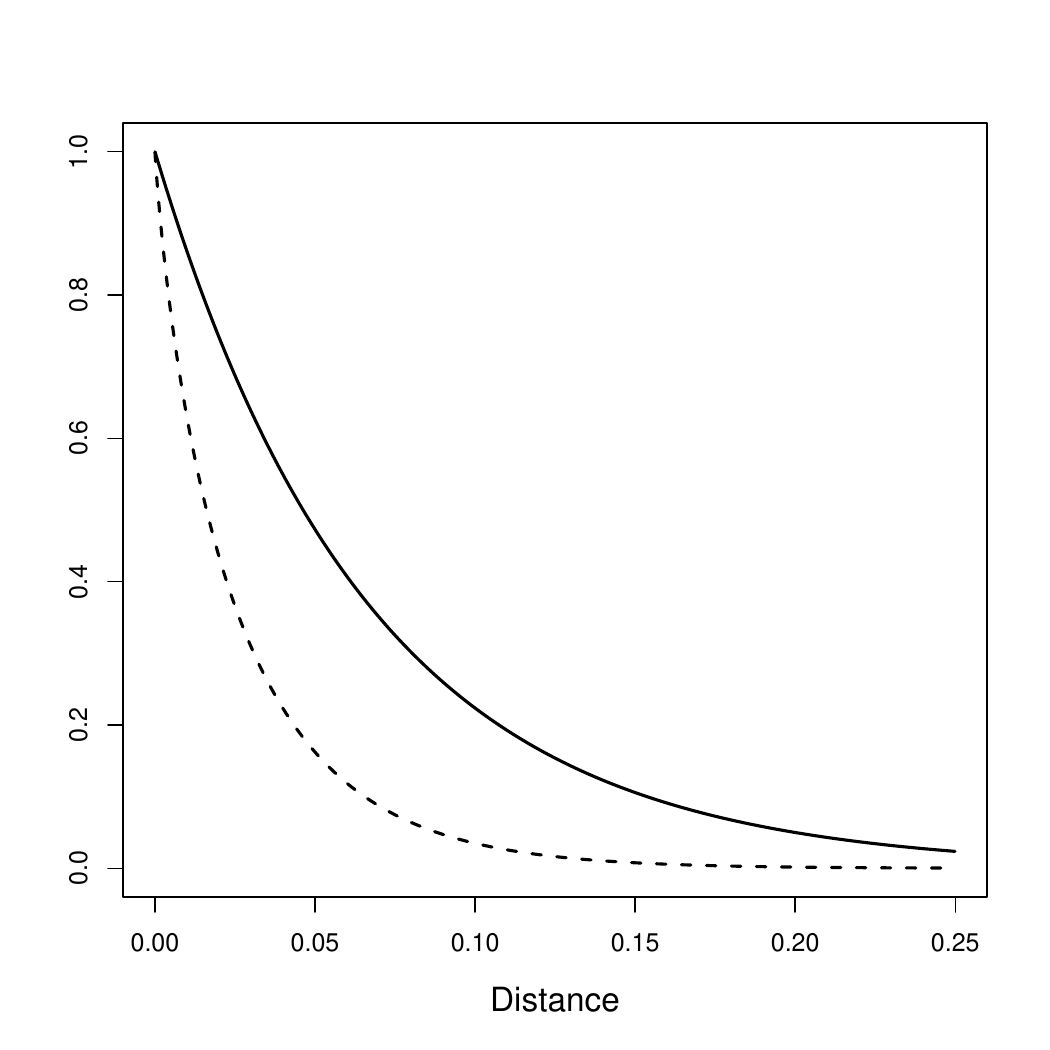} &
\hspace{-0.5cm}
\includegraphics[width=0.3\linewidth, height=0.2\textheight]{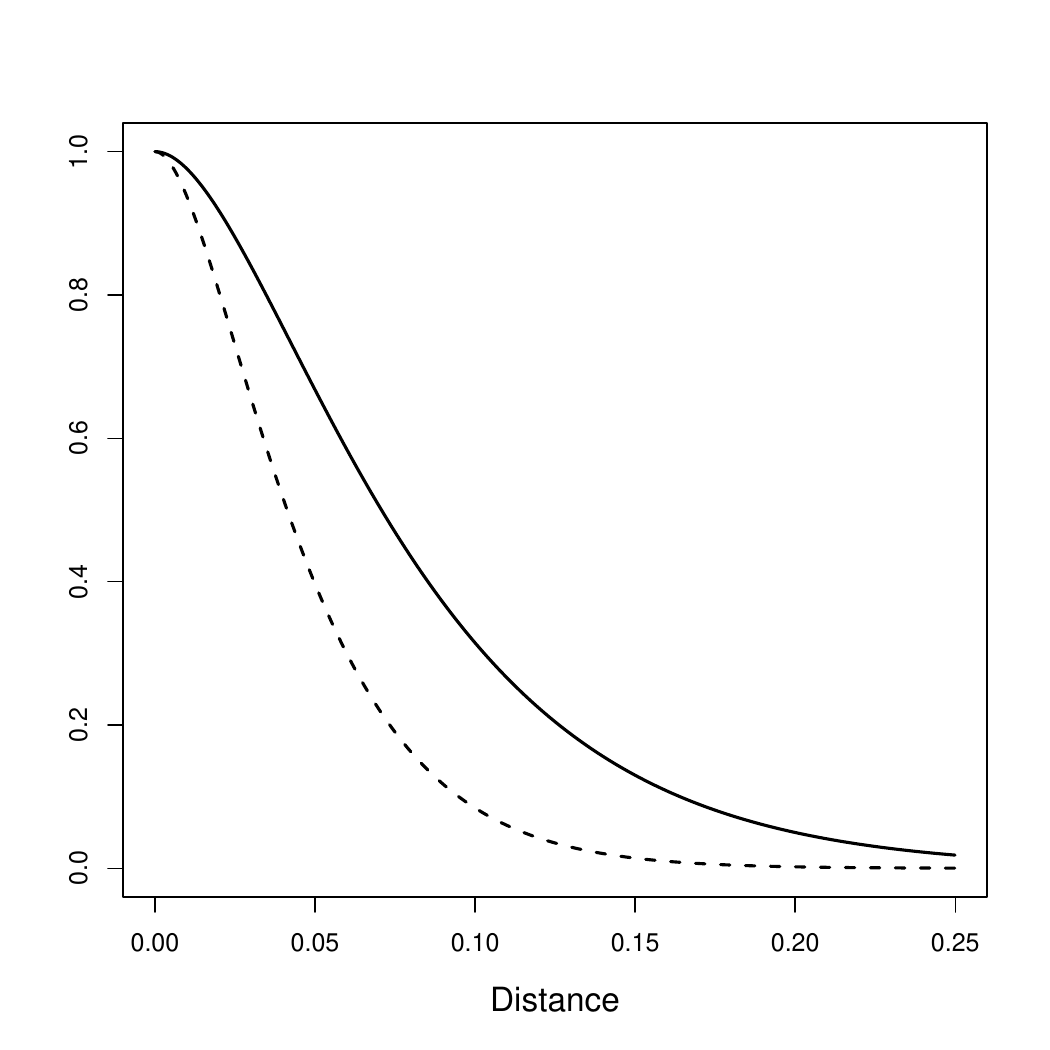}  &
\hspace{-0.5cm}
\includegraphics[width=0.3\linewidth, height=0.2\textheight]{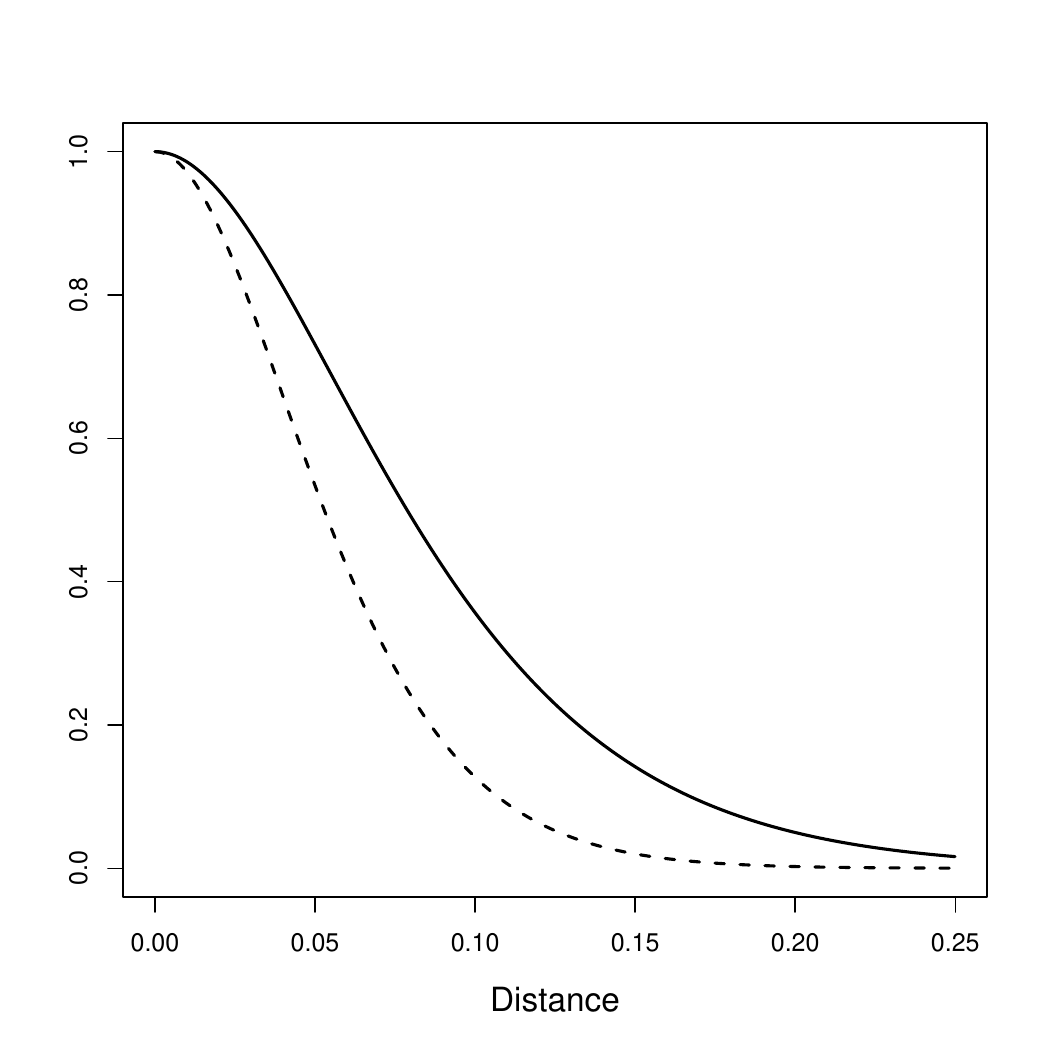}  \\
(a)&(b)&(c)\\
\hspace{-.3cm}
\includegraphics[width=0.35\linewidth, height=0.2\textheight]{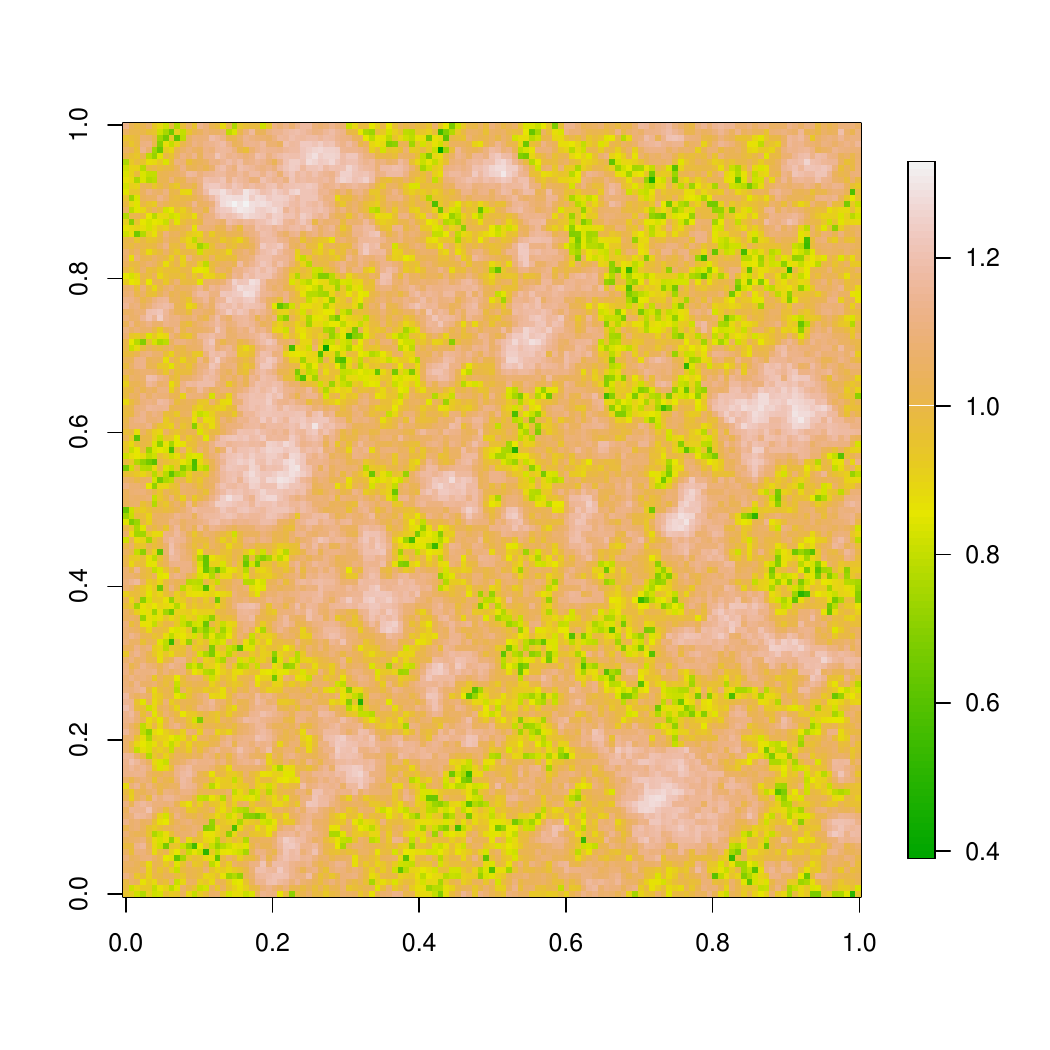}  &
\hspace{-.3cm}
\includegraphics[width=0.35\linewidth, height=0.2\textheight]{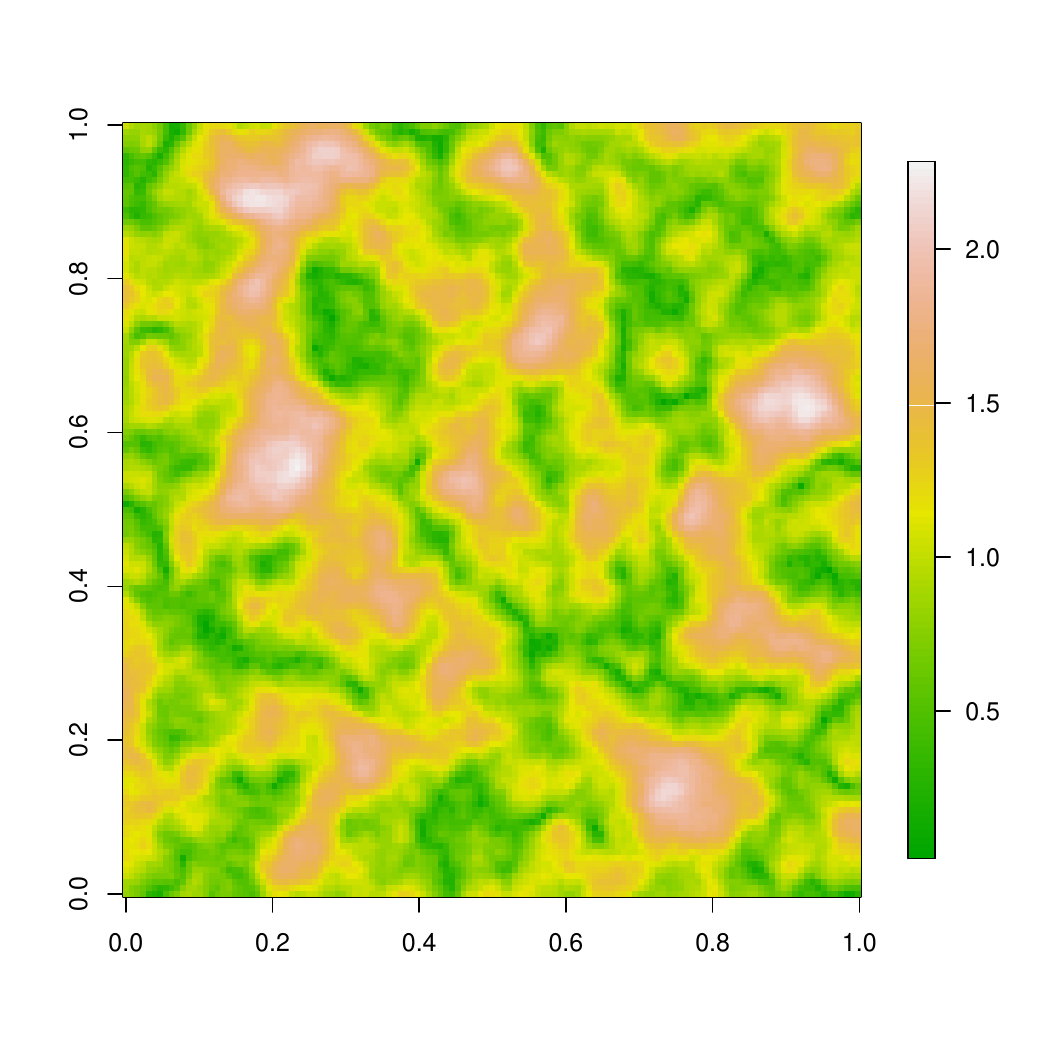} &
\hspace{-.3cm}
\includegraphics[width=0.35\linewidth, height=0.2\textheight]{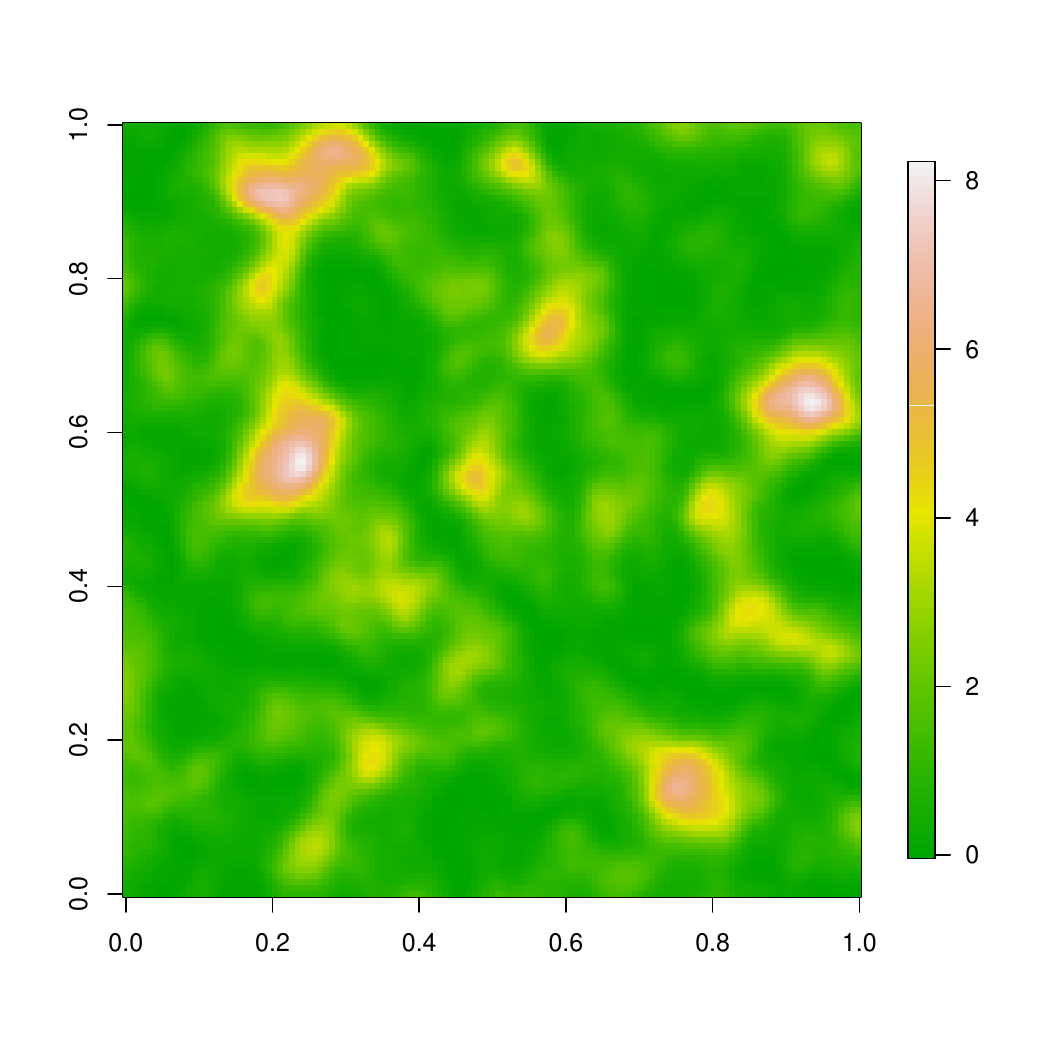}  \\
(d)&(e)&(f)\\
\hspace{-0.5cm}
\includegraphics[width=0.3\linewidth, height=0.2\textheight]{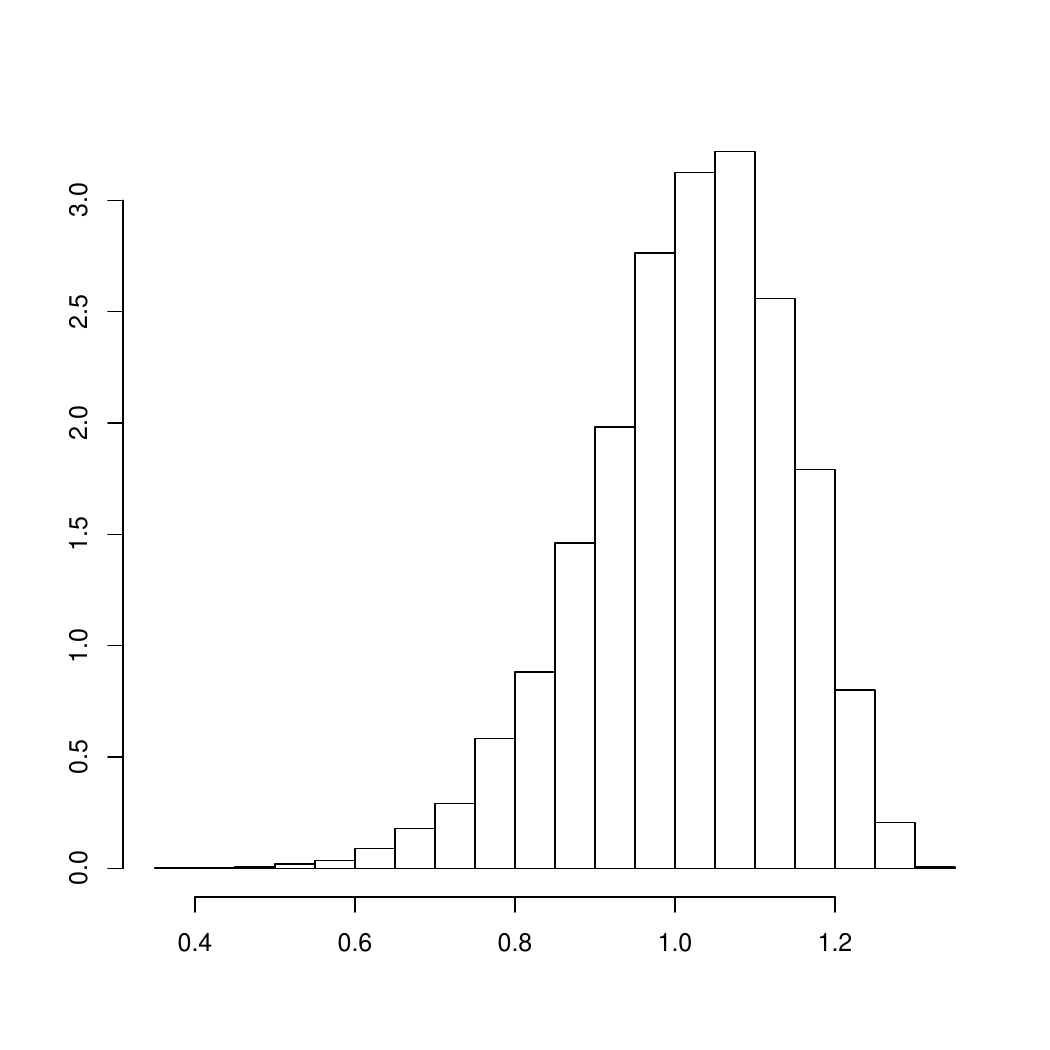} &
\hspace{-0.5cm}
\includegraphics[width=0.3\linewidth, height=0.2\textheight]{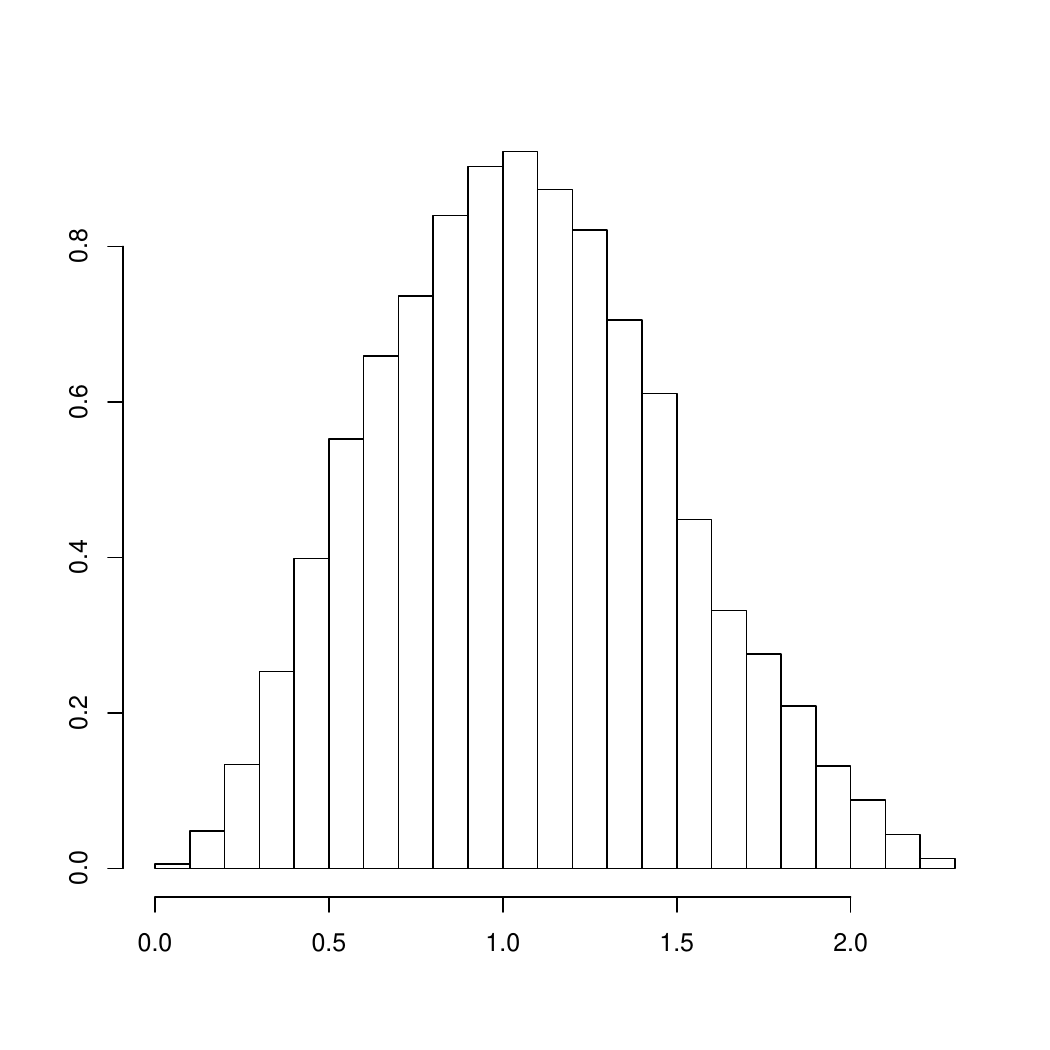}  &
\includegraphics[width=0.3\linewidth, height=0.2\textheight]{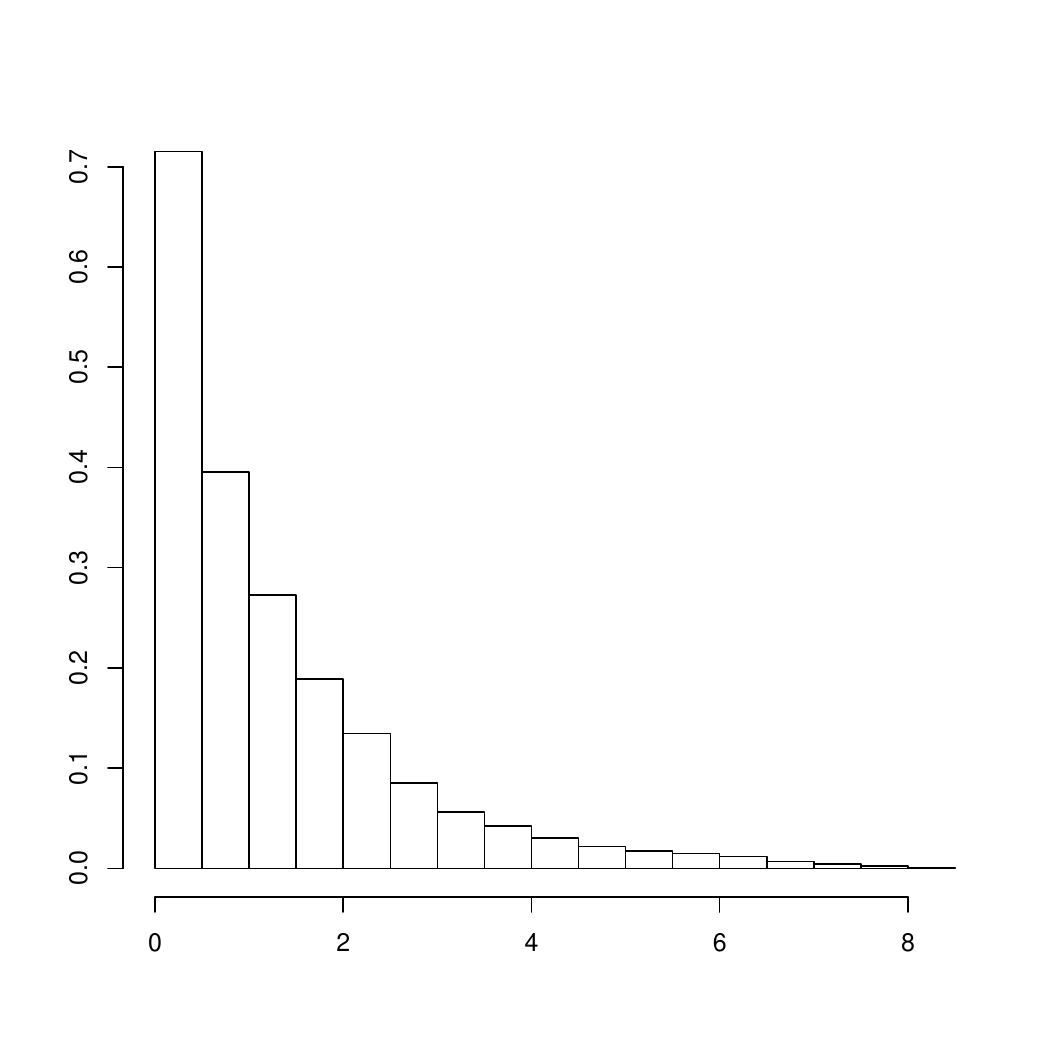}  \\
(g)&(h)&(i)
\end{tabular}
\end{center}
\caption{Top:
comparison between $\rho(h)$, the  correlation function of the `parent' Gaussian process (a Mat\'ern correlation function with smoothness parameter $\nu=0.5, 1.5, 2.5$
and practical range  approximately equal to $0.2$),  with the associated correlation of the Weibull model  $\rho_W(h)$  (dashed line) 
with shape parameter $\kappa$ for  $(\kappa,\nu)=(10,0.5) ,   (3,1.5),  (1,2.5)$,  from  left to right.
Center: three realizations of the Weibull model $W$ under the setting (a),(b), and (c).
Bottom: histograms  of the realizations in (d),(e), and (f).}
 \label{fig:fcov-matern}
\end{figure}
Finally, a non-stationary version of $W$
 can be easily obtained trough a multiplicative model:
\begin{equation}\label{def:reg-weibull}
{Y}({s}):=
\mu(s) W({s}),
\end{equation}
where $\mu(s)>0$ is a non random function that specify the mean of $Y$ $i.e.$
$\E(Y(s))=\mu(s)$ and affects its variance
$\var({Y}({s}))=\mu(s)^2(\Gamma\left(1+2/ \kappa\right)\nu^2(\kappa)-1)$.

A useful parametric specification for $\mu(s)$ is given  through log-linear link function 
 $$\log(\mu({s})) =\beta_0+\beta_1 v_1(s)+\cdots+\beta_{p}v_{p}(s)$$
where $v_i(s)$, $i=1,\ldots,p$, are   covariates and  $\beta=(\beta_0,\ldots,\beta_{p})^\top$ is a vector of regression parameters but other types of parametric or nonparametric functions can be considered.

Finally, note that  given the observations $y(s_1),\ldots,y(s_n)$
and   an estimation of the mean function $\hat{\mu}(s_i)$, the estimated residuals 
$\widehat{w}(s_i)={y}({s_i})/\hat{\mu}(s_i)$
can be treated as a realization of the stationary Weibull process $W$ with marginal distribution $\mbox{Weibull}( \kappa, \nu(\kappa))$ and  correlation  $\rho_{W}(h)$. This can be used  to check 
the agreement between  the distribution of the residuals and the estimated theoretical marginal distribution or to check
 the agreement between the theoretical estimated semi-variogram model obtained from (\ref{ccc}) and its empirical counterpart (see Section 6).

\section{Estimation and prediction}\label{sec:est-pred}
\subsection{Pairwise likelihood inference}
Suppose that we have observed $y_1,\ldots,y_n$  at the locations $s_1,\ldots,s_n$ and
let $\theta$ be the vector of unknown parameters for the Weibull random process  (\ref{def:reg-weibull}).
The  evaluation of the full likelihood for $\theta$ is impracticable 
since, as outlined in Section 2, the multivariate density 
can be derived only in some special cases. A possible  alternative \citep{Lindsay:1988,Varin:Reid:Firth:2011}  combines  the bivariate distributions of  all possible distinct pairs of observations $(y_i,y_j)$.
 The weighted pairwise likelihood (WPL) function
 is given by
\begin{equation}
\label{pln}
\operatorname{pl}(\theta):=\sum_{i=1}^n\sum_{j>i}^n\log f (y_i,y_{j};\theta) {c_{ij}}
\end{equation}
where
$f (y_i,y_j;\theta)$ is  the bivariate
densities of (\ref{def:reg-weibull})  and
$c_{ij}$ are non-negative weights.  The choice of  cut-off weights, namely $c_{ij}= 1$ if
$\|s_i-s_j\| \le \Delta$, and $0$ otherwise, for a positive value of $\Delta$, can be  motivated by its simplicity and by observing that that dependence  between observations which are distant  is weak.  Therefore, the use of all  pairs may  skew the information confined in pairs of near observations
\citep{Davis:Yau:2011,Bevilacqua:Gaetan:2015}.

Under the increasing domain asymptotics framework \citep{Cressie:1993} and arguing as in  \citet{Bevilacqua:Gaetan:2015},
it can be shown that the  maximum weighted pairwise likelihood (MWPL) estimator $
\widehat{\theta}:=\operatorname{argmax}_\theta\, \operatorname{pl}(\theta)
$
is consistent and asymptotically Gaussian.
The  asymptotic covariance matrix of the estimator is given by
the inverse of the Godambe information
$$\mathcal{G}_n(\theta):=\mathcal{H}_n(\theta)^\top\mathcal{J}_n(\theta)^{-1}\mathcal{H}_n(\theta),
$$
where
$\mathcal{H}_n(\theta):=\E[-\nabla^2 \operatorname{pl}(\theta)]$ and $\mathcal{J}_n(\theta):={\var}[\nabla \operatorname{pl}(\theta)]$.

The matrix $\mathcal{H}_n(\theta)$ can be estimated by $\widehat{\mathcal{H}}=
-\nabla^2 \operatorname{pl}(\widehat{\theta})$ and  the estimate  $\widehat{
	\mathcal{J}}$ of $\mathcal{J}_n(\theta)$  can be  calculated with a sub-sampling
technique \citep{Heagerty:Lele:1998,Bevilacqua:Gaetan:Mateu:Porcu:2012}. 
Additionally, model selection can be performed by considering  the pairwise likelihood information criterion (PLIC) \citep{Varin:Vidoni:2005}
$$
\mbox{PLIC}:= - 2\operatorname{pl}(\hat{\theta})  + 2\mathrm{tr}(\widehat{\mathcal{J}}
\widehat{\mathcal{H}}^{-1})
$$
which is the composite likelihood version of
the Akaike information
criterion (AIC).

\subsection{Linear prediction}\label{sec:numerical_examples}

The lack of  workable multivariate densities forestalls the use of the conditional distributions for the prediction.
Therefore we choose a sub optimal solution, i.e. a linear predictor  for the random variable $Y(s_0)$ at some unobserved location ${s}_0$ based on the data at locations $s_1,\ldots,s_n$ ,  following a suggestion in  \citet{Bellier:Monestiez:Guinet:2010} and \citet{DeOliveira:2014}.

The predictor for the non-stationary Weibull process  is given by
\begin{equation}\label{pred-sk}
\widehat{Y}(s_0):=\mu(s_0)\left\{1+\sum_{i=1}^n\lambda_i (W(s_i)-1)\right\},
\end{equation}
where  $W(s_i)={Y(s_i)}/{\mu(s_i)}$.
It is a  linear predictor and  unbiased predictor of  $Y(s_0)$ for any
vector of weights $\lambda=(\lambda_1,\ldots,\lambda_n)^\prime$. 
The vector of weights  $\lambda=(\lambda_1,\ldots,\lambda_n)^\prime$  is set by minimizing the mean square error $\E[Y(s_0)-\hat{Y}(s_0)]^2$ with respect to  $\lambda$.
 {Note that this is a classical geostatistical  approach  applied to a multiplicative  model instead of the classical additive model.}
It turns out that the  solution for the predictor is given by the equations of the simple %
kriging,  \citep[Chapter 3]{Cressie:1993} i.e.
$\lambda =  C^{-1}_W\, c_W(s_0)$
and  the associated mean square prediction error is given by
$$\var( \widehat{Y}(s_0)):=\mu^2(s_0)\sigma^2_W\left\{1-  c_W(s_0)^\prime C_W^{-1} c_W(s_0)\right\},$$
where $\sigma^2_W:=(\Gamma\left(1+2/ \kappa\right)\nu^2(\kappa)-1)$, 
$c_W(s_0)= (\rho_W(s_0-s_i),\cdots,\rho_W(s_0-s_n))^\prime$  
and $C_W$ is the $n\times n$ correlation matrix whose $(i,j)$th element is $\rho_W(s_i-s_j)$
with $\rho_W(h)$  given in  (\ref{ccc}). 

In practice the predictor cannot be evaluated since  $\mu(s)$ and $\rho_W(h)$ are
unknown. For this reason we suggest to use a plug-in estimate for
$\mu(s)$ and $\rho_W(h)$ using the pairwise likelihood estimates.

\section{Simulation results}\label{sec:numerical-results}

 {In this section we investigate, through some numerical experiments,  the relative  efficiency of the  MWPL estimator with respect to the maximum  likelihood (ML) estimator
and the relative  efficiency of the  linear predictor (\ref{pred-sk})  with respect to the optimal predictor,
under a specific setting of simulation where the comparisons can be explicitly performed.}
This specific setting is when the process is defined on $\R$ and the underlying correlation function is exponential.
Even though this setting may seem artificial,
the simulation study gives an idea of the relative efficiency of the MWPL estimation method and the proposed linear predictor under more general settings. 

We have considered a non-stationary Weibull model (\ref{def:reg-weibull}) observed at   $150$ locations of a regular grid  $0=s_1<s_2<\cdots< s_{150}=1$
where the `parent'  Gaussian random process has exponential correlation function
$\rho_{i,j}:=\rho(s_i-s_j)= \exp(-|s_i-s_j|/\phi)$. 

In this case, the multivariate density function   associated with the Weibull process is easily obtained from    (\ref{gammafd1}), namely
\begin{equation}\label{qqq}
f_{Y}(y_1,\ldots,y_n)= \left\{\frac{\kappa}{\nu(\kappa)^\kappa}\right\}^n
f_{X_2}(x_1,\ldots x_n) \prod_{i=1}^n \frac{y_i}{\mu_i} \,
\end{equation}
where $x_i:=\{y_i/(\nu(\kappa)\mu_i)\}^\kappa$, $\mu_i:=\mu(s_i)$ and $f_{X_2}$ can be obtained from (\ref{gammafd1}).
Therefore this  setup allows a comparison of the MWPL and  ML  estimation methods.

We set $\mu(s)=\exp\{\beta_0+\beta_1 v_1(s)\}$ where $v_1(s)$ is a value from the $(0,1)$-uniform distribution and $\beta_0=0.25$ and $\beta_1=-0.15$. Three choices  of  the  shape parameter  $\kappa=1, 3, 10$ are coupled with three values of the range parameter $\phi=a/3$, $a=0.1, 0.2,0.3$.

 {We simulate   1,\,000 realizations from each model setting and for each realization, we calculate $\theta^a_k$, $k=1,\ldots,1000$, $a=ML,MWPL$, estimates of $\theta=(\beta_0,\beta_1,\phi,\kappa)^\prime$.
We set $\Delta$ equal to the minimum distance among the points in \eqref{pln} and we use the true value of the parameters as starting value for the Broyden-Fletcher-Goldfarb-Shanno (BFGS) algorithm 
implemented in the \texttt{optim} function of \texttt{R} software \citep{RR}.}

Table \ref{tab:relative} reports the relative efficiency of the  MWPL estimates with respect to the ML estimates  {for each parameter in terms of mean squared error. Additionally, as an overall measure of relative efficiency for the multi-parameter case we have considered}
$$
\operatorname{RE}=\left(\frac{\operatorname{det}[F^{MWPL}]}{\operatorname{det}[F^{ML}]}\right)^{1/p},
$$
 {where  $p = 4$ is the number of unknown parameters in $\theta$ and the matrix $F^a$  is the  sample mean squared error matrix 
$F^a=1000^{-1}\sum_{k=1}^{1000}\left(\hat\theta_{k}^a-\theta\right)\left(\hat\theta_k^a-\theta\right)^\prime$. 
In this experiment, using the WPL instead of the  likelihood function, we loose about 13\% of the overall efficiency in the worst case which is an encouraging result.
It is interesting to note that the relative efficiency   of each parameter is different, but  only the  relative efficiency of the  shape parameter $\kappa$ is  affected when we consider different strengths of the spatial dependence,  i.e. different values of $\phi$. }

\begin{table}[!hbtp]
\begin{center}
\scalebox{0.8}{
\begin{tabular}{|c|c|c|c|c|}
\hline
 $\kappa$&  & $\phi=0.1/3$ & $\phi=0.2/3$ & $\phi=0.3/3$ \\ \hline
\multicolumn{1}{|c|}{\multirow{5}{*}{$1$}} & $\beta_0$ & $0.964$ & $0.967$ & $0.956$ \\
\multicolumn{1}{|c|}{} & $\beta_1$ & $0.862$ & $0.860$ & $0.874$ \\
\multicolumn{1}{|c|}{} & $\phi$ & $1.045$ & $1.037$ & $1.034$ \\
\multicolumn{1}{|c|}{} & $\kappa$ & $0.885$ & $0.703$ & $0.550$ \\ \cline{2-5}
\multicolumn{1}{|c|}{} & RE & $0.954$ & $0.913$ & $0.884$ \\ \hline
\multicolumn{1}{|c|}{\multirow{5}{*}{$3$}} & $\beta_0$ & $0.953$ & $0.947$ & $0.930$ \\
\multicolumn{1}{|c|}{} & $\beta_1$ & 0.862 & 0.860 & $0.874$ \\
\multicolumn{1}{|c|}{} & $\phi$ & $1.045$ & $1.036$ & $1.034$ \\
\multicolumn{1}{|c|}{} & $\kappa$ & $0.885$ & $0.703$ & $0.550$ \\ \cline{2-5}
\multicolumn{1}{|c|}{} & RE & $0.955$ & $0.914$ & $0.886$ \\ \hline
\multicolumn{1}{|c|}{\multirow{5}{*}{$10$}} & $\beta_0$ & $0.947$ & $0.933$ & $0.911$ \\
\multicolumn{1}{|c|}{} & $\beta_1$ & $0.862$ & $0.860$ & $0.874$ \\
\multicolumn{1}{|c|}{} & $\phi$ & $1.044$ & $1.036$ & $1.034$ \\
\multicolumn{1}{|c|}{} & $\kappa$ & $0.885$ & $0.703$ & $0.550$ \\ \cline{2-5}
\multicolumn{1}{|c|}{} & RE & $0.955$ & $0.914$ & $0.886$ \\ \hline
\end{tabular}
}
\end{center}
\caption{Mean squared error relative efficiency  for each parameter and overall relative efficiency  (RE) of MWPL vs ML. }
\label{tab:relative}
\end{table}

 We modify slightly  our  example to illustrate  the quality  of the linear predictor (\ref{pred-sk}) in terms of  the mean squared prediction error (MSPE).  
Suppose that we have observed $Y(s_1)=y_1,\ldots, Y(s_n)=y_n$ 
and we want to predict $Y(s_{n+1})$ with $s_{n+1} > s_n$. In such case the conditional expectation of $Y(s_{n+1})$, i.e. the  predictor the minimizes the MSPE, 
can be derived  in closed form (see Appendix), namely 
\begin{equation*}
\begin{split}
Y^*(s_ {n+1}):=&\Gamma\left(\frac{1}{\kappa}+1\right)(1-\rho^2_{n,n+1})^{1/\kappa}\mu_{n+1}\nu(\kappa)\\
&\times
\exp\left\{-\frac{[y_n/(\mu_n \nu(\kappa))]^\kappa}{(1-\rho^2_{n-1,n})}    \left[
\frac{   (1-\rho^2_{n-1,n} \rho^2_{n,n+1})}{(1-\rho^2_{n,n+1})}-1 \right]\right\}\\
&\times {}_1F_1\left(\frac{1}{\kappa}+1;1;\frac{
[y_n/(\mu_n \nu(\kappa))]^\kappa}{(1-\rho^2_{n,n+1})}\rho^2_{n,n+1}\right).
\end{split}
\end{equation*}

Having collected
$n=21$ observations at locations  $s_1=0,s_2=0.05, \ldots,  s_{n}=1$, we predict the random variable $Y(s_{n+1})$ at $s_{n+1}=1.05$   by means of the optimal predictor $Y^*(s_{n+1})$, and the linear predictor $\widehat{Y}(s_{n+1})$ as in (\ref{pred-sk}).

We   simulate  1,\,000 realizations from  the  stationary Weibull model, i.e. $\mu(s_i)=1$, with the same dependence structure as before. Then we compute the average of the squared prediction errors $[Y(s_{21}) -{Y}^*(s_{21})]^2$ and $[Y(s_{21}) -\widehat{Y}(s_{21})]^2$ and their ratio.
Table \ref{tab:relativeprediction}  shows the ratio  between the  linear and the optimal predictor.
This ratio deteriorates when the strength of the dependence increases as expected, but the loss of the relative efficiency does not exceed thirty-two percent.

\begin{table}[t]
\begin{center}
\scalebox{0.8}{
\begin{tabular}{c|c|c|c|}
\cline{2-4}
& $\phi=0.1/3$ & $\phi=0.2/3$ & $\phi=0.3/3$ \\ \hline
\multicolumn{1}{|c|}{$\kappa =1$} & $0.953$      & $0.805$      & $0.687$      \\ \hline
\multicolumn{1}{|c|}{$\kappa =3$} & $0.960$      & $0.825$      & $0.721$      \\ \hline
\multicolumn{1}{|c|}{$\kappa =10$} & $0.967$      & $0.851$      & $0.764$      \\ \hline
\end{tabular}
}
\end{center}
\caption{Relative efficiency of the  linear predictor  versus the optimal predictor  for a stationary Weibull model defined on $S=[0,1]$ with underlying exponential correlation $\rho(h)=\exp\{-|h|/\phi\}$.}
\label{tab:relativeprediction}
\end{table}

\section{Wind speed data example }\label{sec:real-data}

Our motivating example is 
a dataset of  daily 
average wind speeds  from a network of meteorological stations in the Netherlands. The dataset %
 is  stored on the website of the KNMI Data Centre (\texttt{https://data.knmi.nl/about}) and its access  is  provided under the OpenData policy of the Dutch government.

Among the fifty stations in the dataset, we extracted thirty stations (Figure \ref{fig:map-ned}-a) that do not contain missing data in the period from 01/01/2000 to 31/12/2008 .

Figures \ref{fig:map-ned}-(b,c,d) show  the time series plots of  daily mean wind speeds at four different locations  (Cabauw, Nieuw Beerta, Hoek Van Holland and Rotterdam) in 2000-2004.
The seasonal trend is clearly recognizable and the heteroscedasticity seems related to this trend. Furthermore  if we consider the wind speed box-plots for each station (Figure \ref{fig:boxplots}-a), it is clear that the distribution also depends on the location.
   To avoid a complicated spatial trend specification, we transform 
$Y(s,t)$,   the observation of location $s$ and time $t$, 
to $\widetilde{Y}(s,t)=Y(s,t)/a(s)$ where $a(s)$ is the average of the observations at site $s$. The transformation seems to have an effect of reducing the differences in distribution, see (Figure \ref{fig:boxplots}-b). 

We specify  a multiplicative model for the transformed data, namely
\begin{equation}\label{eq:mult}
\widetilde{Y}(s,t)=\mu(t)E(s,t),
\end{equation}
in which we conveys the  seasonal pattern in the deterministic positive function $\mu(t)$ and $E=\{E(s,t)\}$ is a stationary positive process with unit mean.
In particular we specify  a harmonic model for the temporal trend, i.e.
\begin{eqnarray}\label{eq:harmonics}
\log\mu(t) &=&\beta_0+ \sum_{k=1}^q \left\{\beta_{1,k}\cos\left(\frac{2\pi k t}{P}\right)+\beta_{2,k}\sin\left(\frac{2\pi k t}{P}\right)\right\}
\end{eqnarray}
where  we  set $P=365.25$ days to  handle leap years.

In the  sequel we want  to compare  two specifications of $E$, namely
the proposed Weibull model $E(s,t)=W(s,t)$ and 
a log-Gaussian model $E(s,t)=\exp(\sigma Z(s,t)-\sigma^2/2)$, $\sigma >0$ where $Z$ is a standard space-time Gaussian process.

 We   first get a preliminary estimate of the seasonal effect $\mu(t)$ assuming space-time independence and by using least squares and regressing $q=4$ annual harmonics on the 
  logarithm of the observations
\begin{equation}\label{eq:prel-reg}
\log\widetilde{Y}(s,t)=\beta_0+ \sum_{k=1}^4 \left\{\beta_{1,k}\cos\left(\frac{2\pi k t}{P}\right)+\beta_{2,k}\sin\left(\frac{2\pi k t}{P}\right)\right\}+\varepsilon(s,t)
\end{equation}
with  $\E(\varepsilon(s,t))=0$ and $\var(\varepsilon(s,t))=\sigma^2_\varepsilon<\infty$.  Under  the Weibull marginal distribution for $E(s,t)$ we 
identify  
$\beta_0$ with $\beta_0+\log(\nu(\kappa))- {\gamma}/{\kappa}$  since   $-\log W(s,t)$ is a Gumbel random variable with mean $-\log \nu(\kappa)+ {\gamma}/{\kappa}$ 
and $\gamma \approx 0.5772$ is the Euler–Mascheroni constant. 
Instead,  under  the log-Gaussian marginal distribution for $E(s,t)$ we 
identify  
$\beta_0$ with $\beta_0-\sigma^2/2$.

 {We have used  the values $e(s,t)=\exp(\widehat{\varepsilon}(s,t))$, where   $\widehat{\varepsilon}(s,t)$ are the   estimated residuals  of the fitted  regression model (\ref{eq:prel-reg}), 
for getting more insight about 
soundness of the  model specification (\ref{eq:mult}) for the wind speed data.
Since parameter $\mu(t)$ controls at the same time the expectation and the variance of $\tilde{Y}(s,t)$, it is expected that the residuals $e(s,t)$ are homoschedastic. This is confirmed by grouping $e(s,t)$ by month and looking at the corresponding boxplots (Figure \ref{fig:res-1}-(a)). Moreover comparing the overall qq-plots of $e(s,t)$ (Figure \ref{fig:res-1}-(b))
 there is convincing evidence that a Weibull distribution is more appropriate  with respect to the log-Gaussian one.}

In addition, if  we transform the residuals of each location to the normal scores by means of the empirical transform,
 the scatter-plots of the normal scores of Rotterdam station versus the  normal scores of three other stations  {(Figure \ref{fig:res})}  point out that there is more dependence in the upper corner, i.e. the lack of symmetry. This implies that  the Weibull model seems more appropriate  for modeling the pairwise dependence  with respect  to a
 log-Gaussian model since its copula  is reflection symmetric.

Finally, the spatial and temporal marginal  empirical semi-variograms  of the residuals %
 exhibit a strong and long decay dependence for the spatial margin and a weak dependence for the temporal margin.
This suggests the use of the following   space-time correlation   \citep{Porcu:Bevilacqua:Genton:2019}:

\begin{equation}\label{eq:space-time-corr}
\rho(h,u)=
\frac{1}{(1+\|h\|/\phi_S)^{2.5}}  \left (1 -\frac{|u|}{\phi_T(1+\|h\|/\phi_{S})^{{-\phi_{ST}}}} \right)_+^{3.5}, \end{equation}
with  $\phi_S> 0$, $\phi_T> 0$, $0\leq \phi_{ST} \leq 1$. %
When the space-time interaction parameter $\phi_{ST}$ is zero,  then the space-time correlation is simply the product of a spatial Cauchy correlation function and  a temporal  Wendland correlation function \citep{Bevilacqua:20189}, i.e.  a separable model for the underlying spatio temporal Gaussian process. However, from (\ref{ccc})
it is apparent that  separability  is not inherited for the  Weibull and Log-Gaussian models. In this application, we have considered three different degree of space-time interaction  by fixing 
$\phi_{ST}=0,  0.5, 1$.

\begin{figure}
\begin{center}
\begin{tabular}{cc}
\includegraphics[width=0.50\linewidth, height=0.30\textheight]{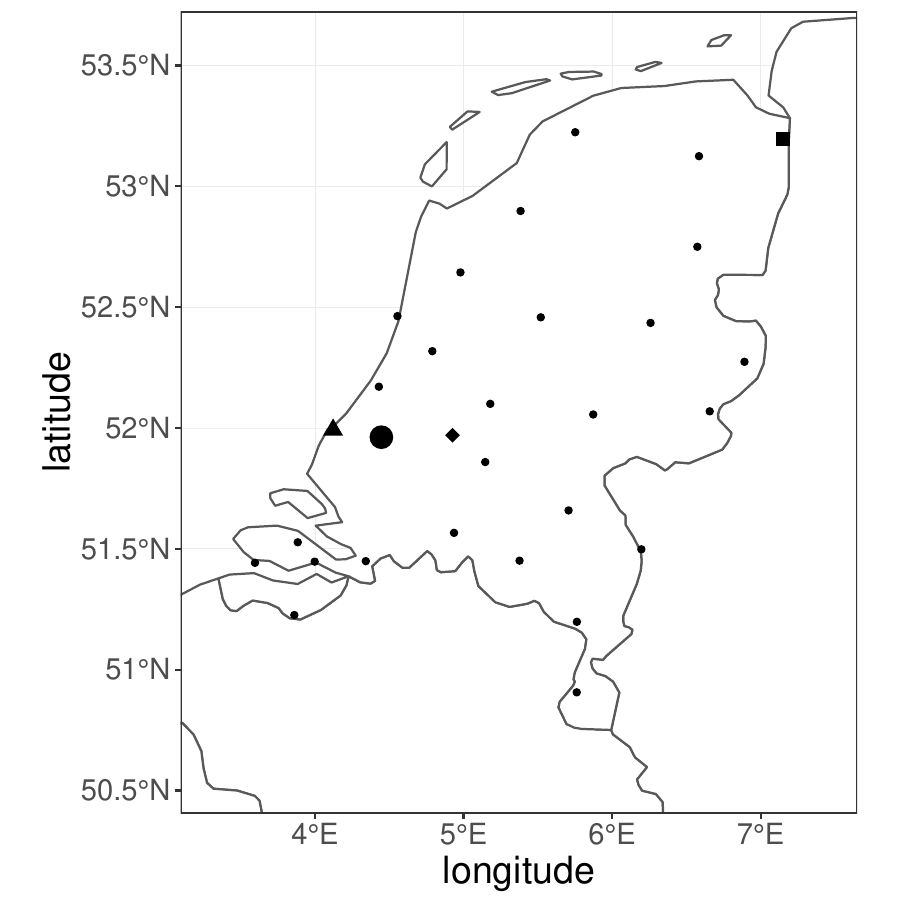}
&
\includegraphics[width=0.45\linewidth, height=0.30\textheight]{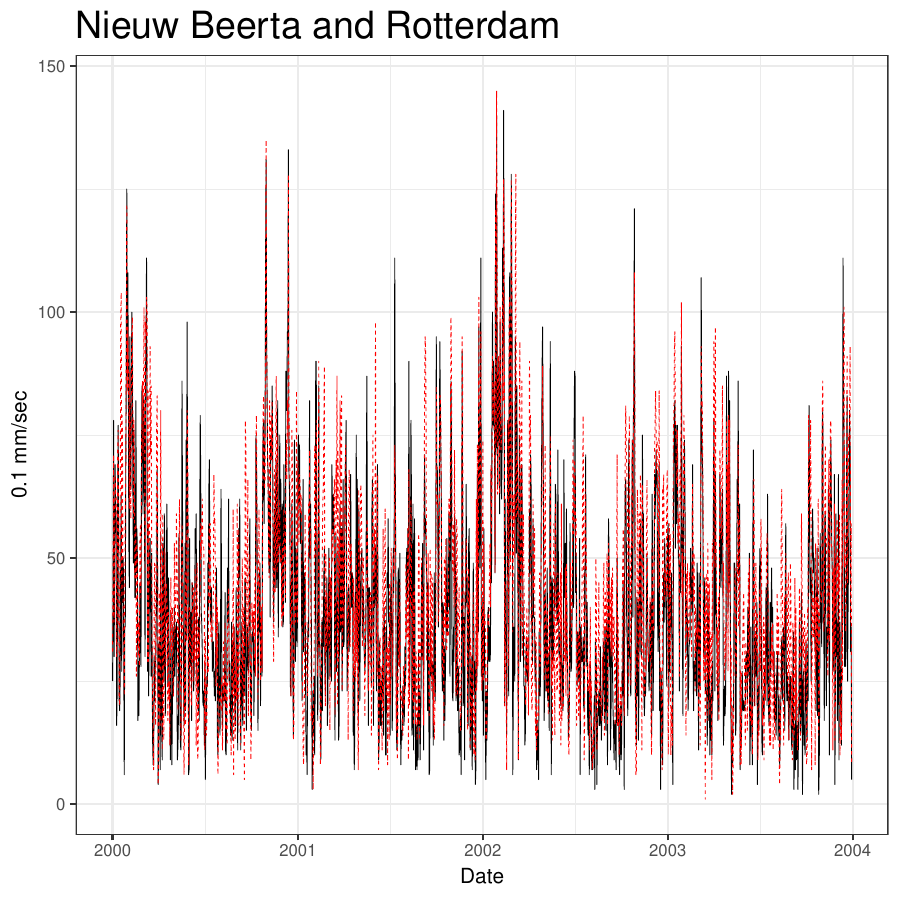}
\\
(a) & (b)
\\
\includegraphics[width=0.45\linewidth, height=0.30\textheight]{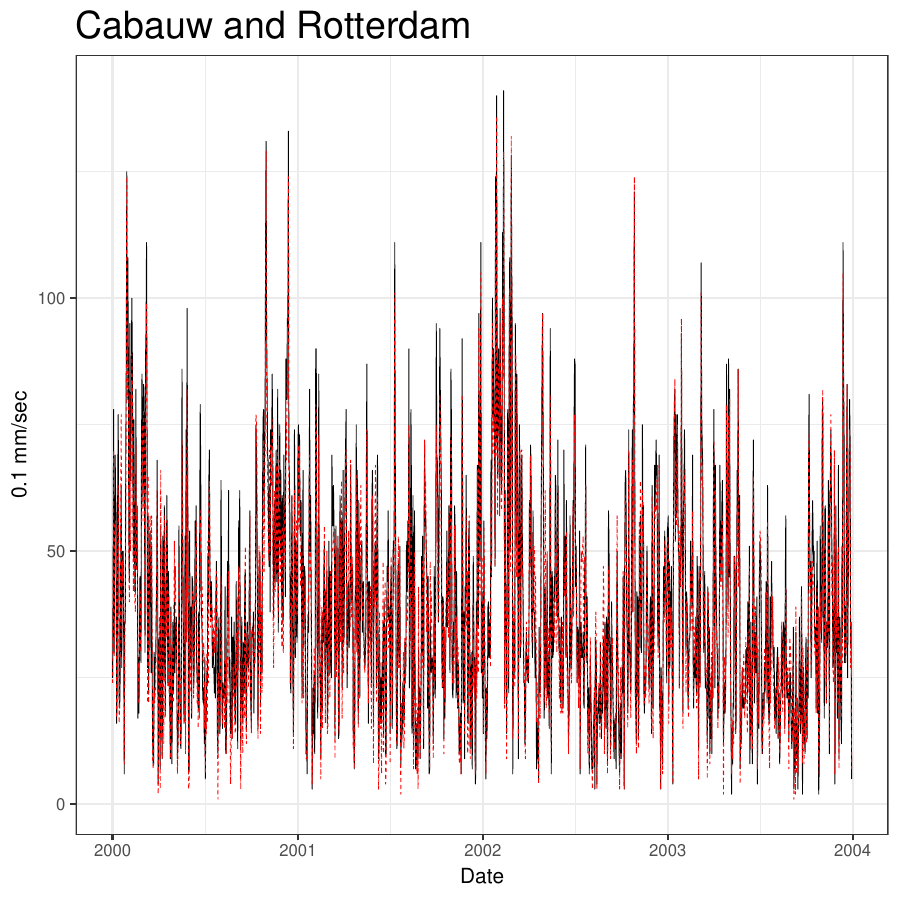}  & 
\includegraphics[width=0.45\linewidth, height=0.30\textheight]{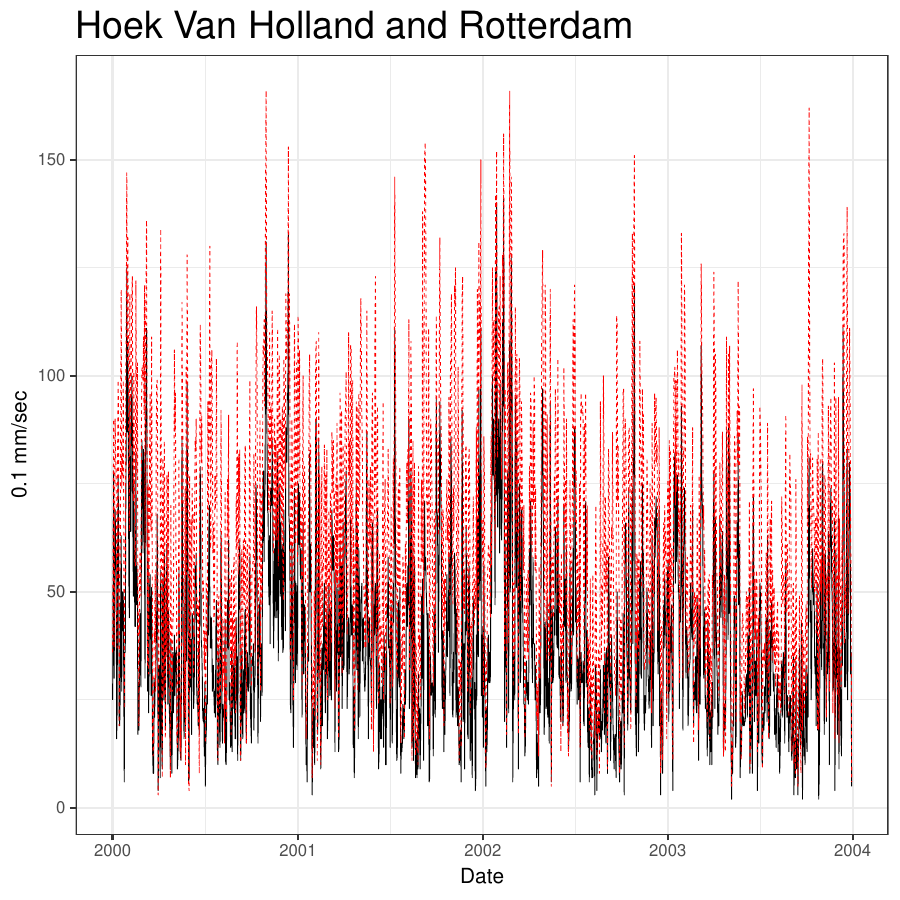}
\\
(c) & (d)
\end{tabular}
\end{center}
\caption{Wind speed data of Netherlands. (a) Map of the meteorological stations selected for our case study. Symbols     $\blacktriangle$, $\blacklozenge$, $\blacksquare$, \CIRCLE \,  correspond to    Cabauw, Hoek Van Holland, Nieuw Beerta and Rotterdam stations; (b-c-d) Time series plots (black lines<) of the daily  wind speed data (01/01/2000-31/12/2004) at Cabauw, Hoek Van Holland and Nieuw Beerta stations versus Rotterdam stations (red line).}
\label{fig:map-ned}
\end{figure}

\begin{figure}
\begin{center}
\begin{tabular}{cc}
\includegraphics[width=0.45\linewidth, height=0.3\textheight]
{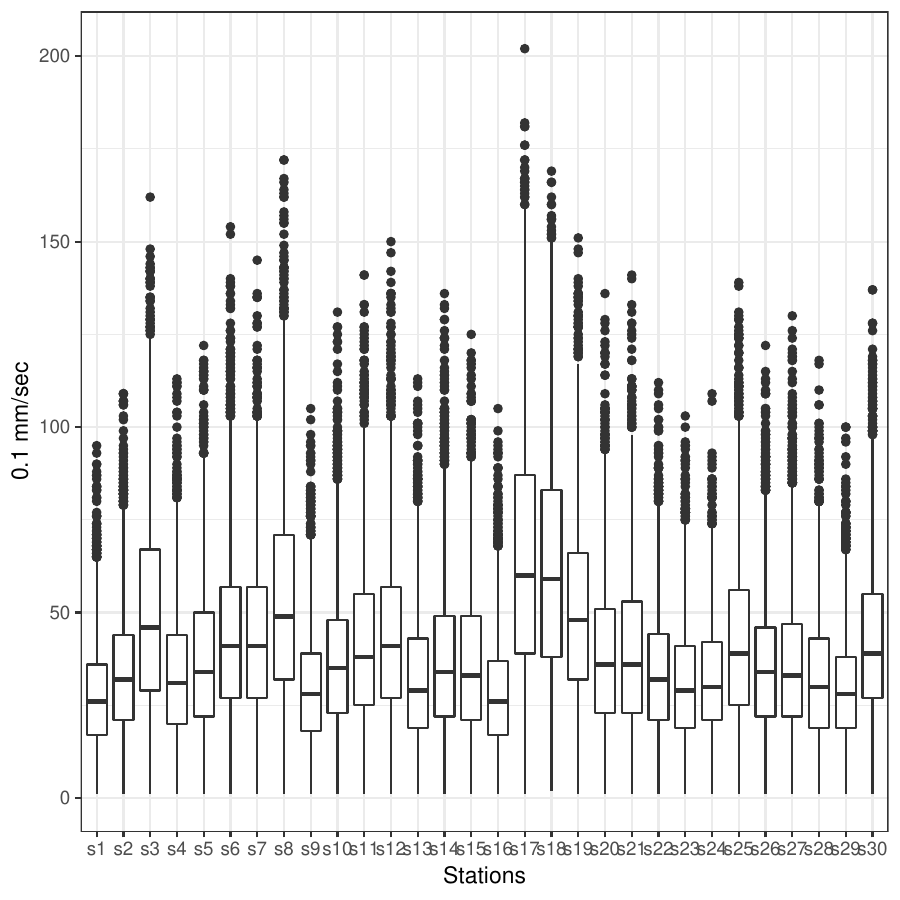} &
\includegraphics[width=0.45\linewidth, height=0.3\textheight]
{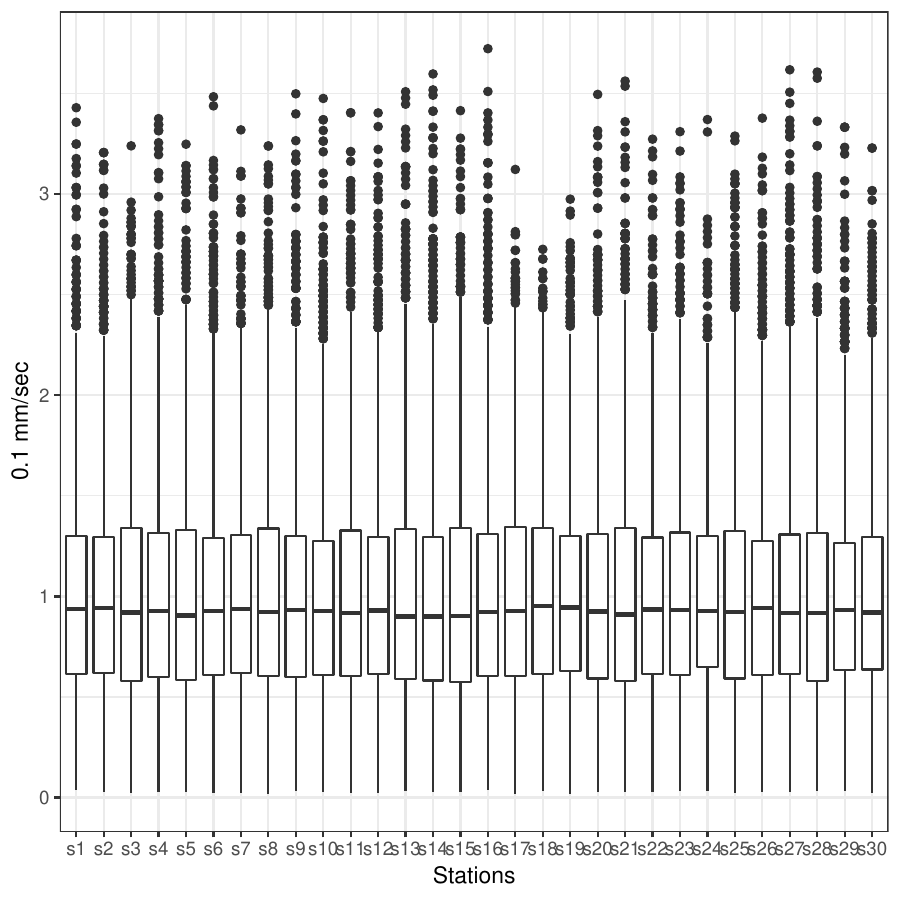}
\\
(a) & (b)
\end{tabular}
\end{center}
\caption{(a) boxplots of the daily wind speed data for each meteorological stations over the period 2000-2008; (b)  boxplots of the daily wind speed data rescaled by the average over the considered period.}\label{fig:boxplots}

\end{figure}

\begin{figure}[H]
	\begin{center}
		\begin{tabular}{cc}
			\includegraphics[width=0.4\linewidth, height=0.3\textheight]
			{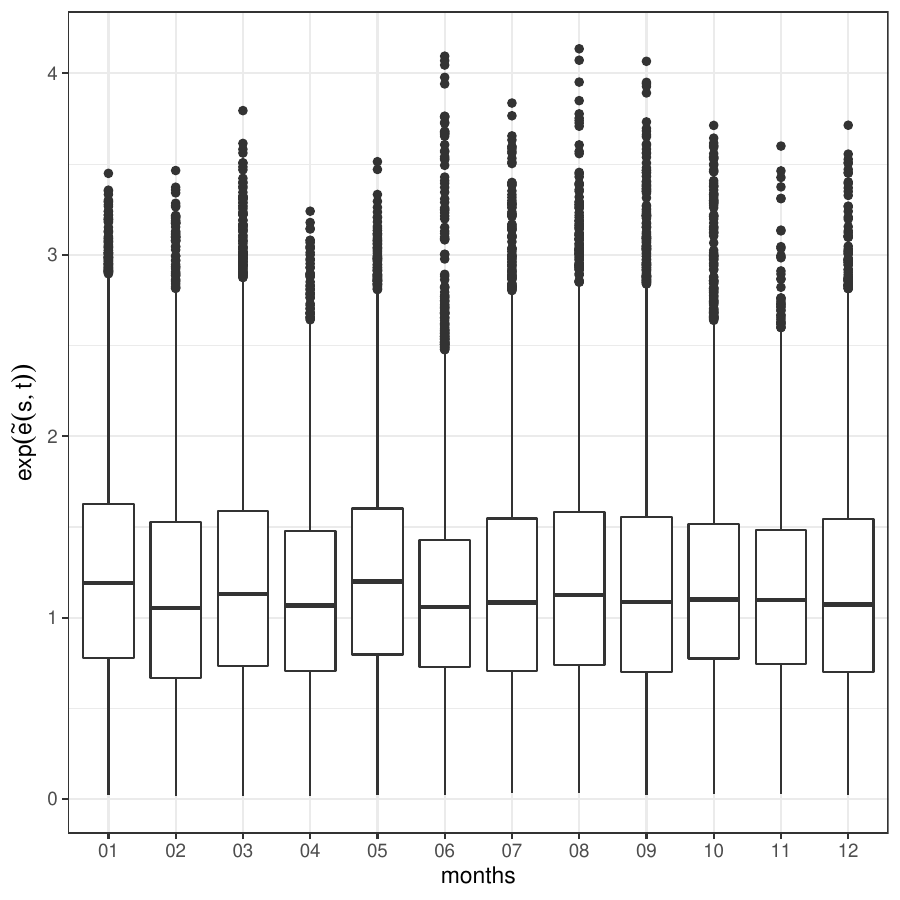}			 &
			\includegraphics[width=0.4\linewidth, height=0.3\textheight]
{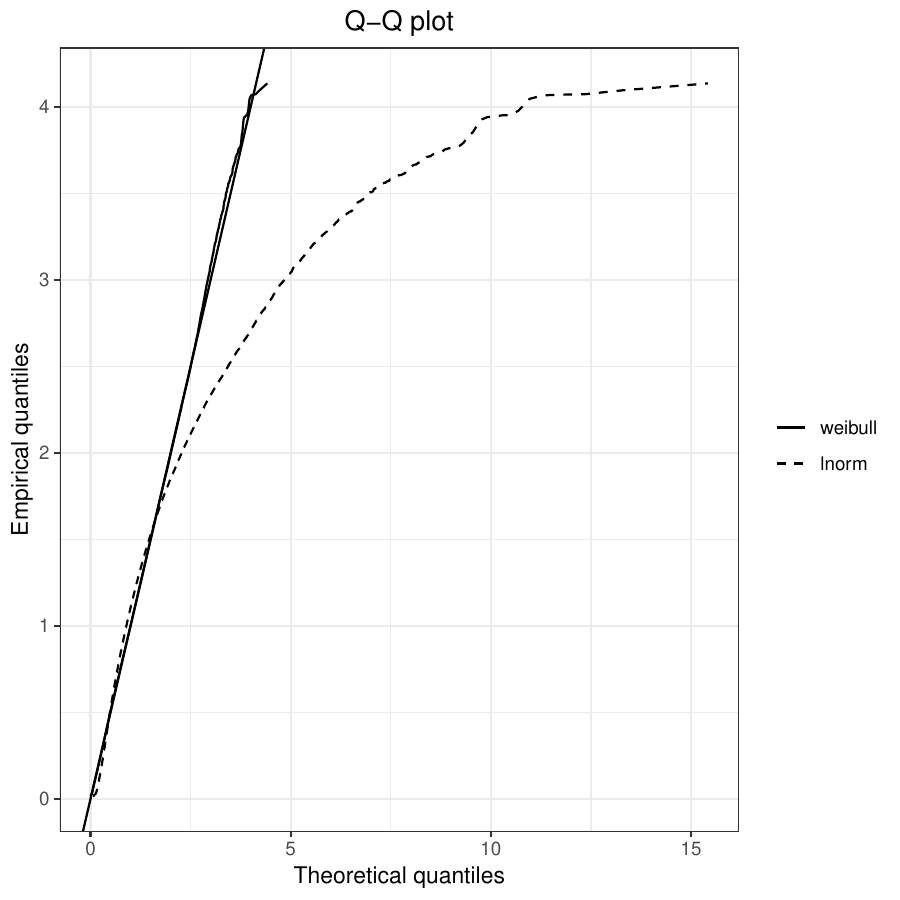}
			\\
			(a) & (b) \\
		\end{tabular} 
	\end{center}
	\caption{Preliminary analysis of residuals, $r(s,t)$  obtained in fitting model (\ref{eq:prel-reg}) by least-squares:
			(a) boxplots of the exponential of the residuals for each month.
		(b) qq-plot of the exponential of the residuals against the Weibull and Log-Gaussian distribution. 
	}
	\label{fig:res-1}
\end{figure}

\begin{figure}
\begin{center}
\begin{tabular}{cc}
\multicolumn{2}{c}{\includegraphics[width=0.4\linewidth, height=0.3\textheight]
{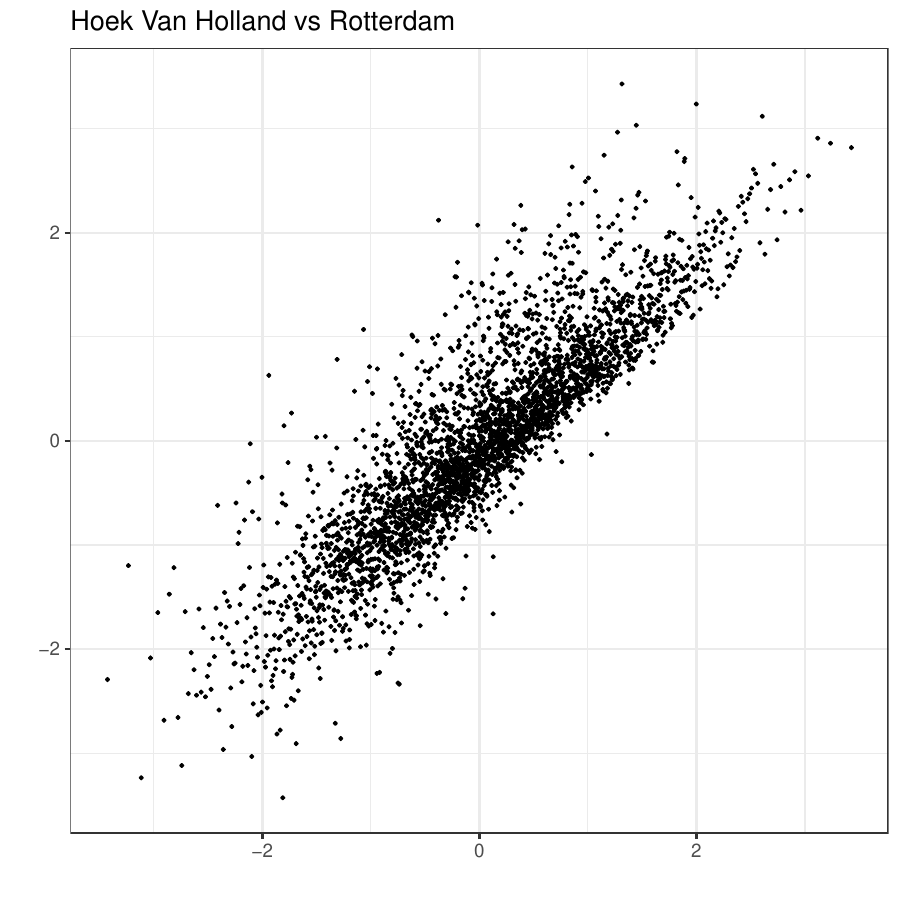}}
\\
\includegraphics[width=0.4\linewidth, height=0.3\textheight]
{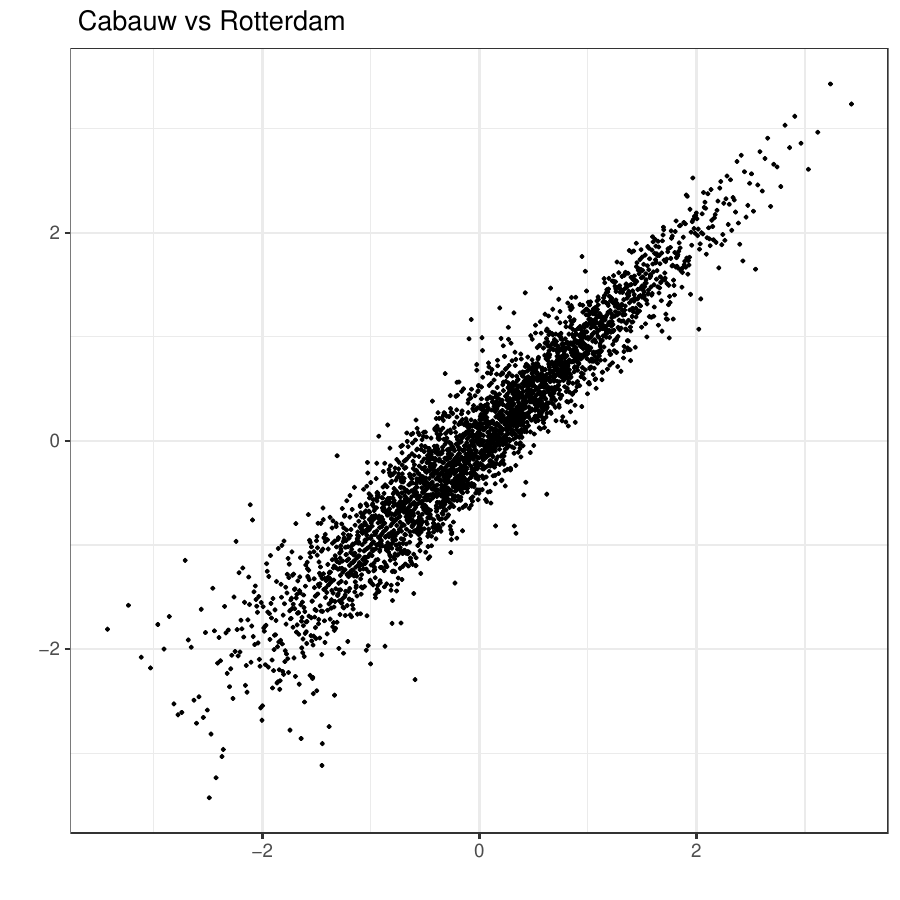}
&
\includegraphics[width=0.4\linewidth, height=0.3\textheight]
{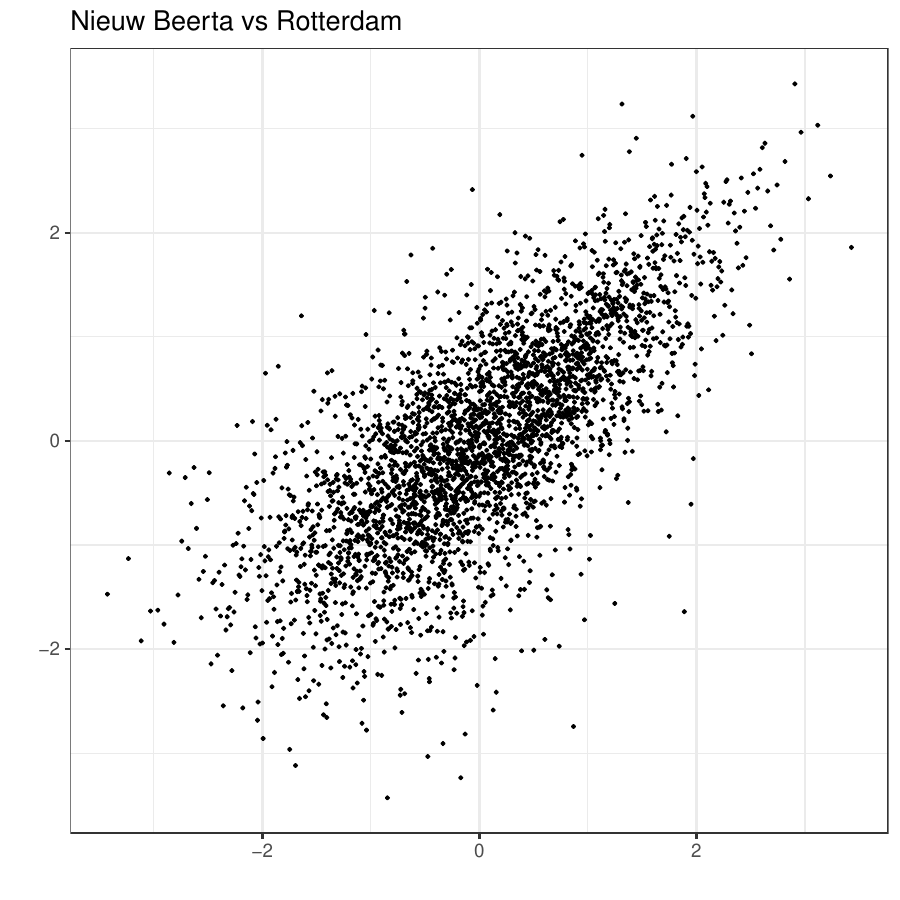}\\
\end{tabular}
\end{center}
\caption{Preliminary analysis of residuals, $r(s,t)$  obtained in fitting model (\ref{eq:prel-reg}) by least-squares:
the scatterplots of the normal scores of Rotterdam
station vs the  normal scores of three other stations.
}
\label{fig:res}
\end{figure}

\begin{figure}
\begin{center}
\hspace{-1cm}
\begin{tabular}{cc}
\hspace{-0.5cm}
\includegraphics[width=0.45\linewidth, height=0.3\textheight]{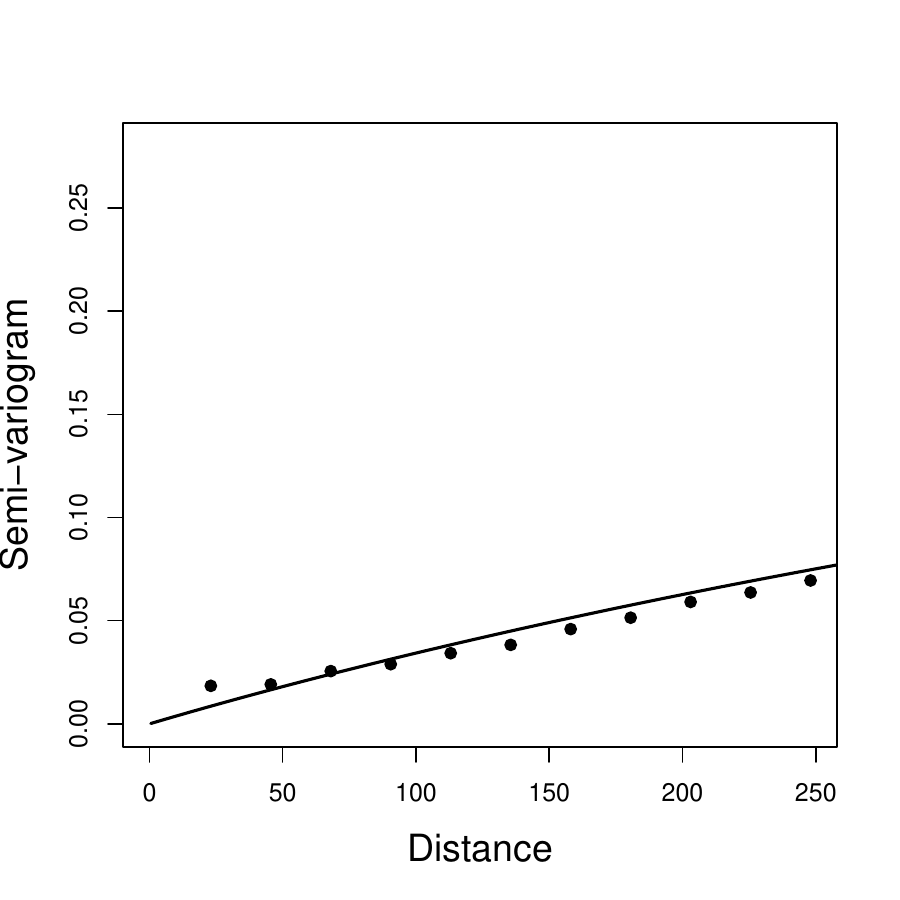}  &
\hspace{-0.5cm}
\includegraphics[width=0.45\linewidth, height=0.3\textheight]{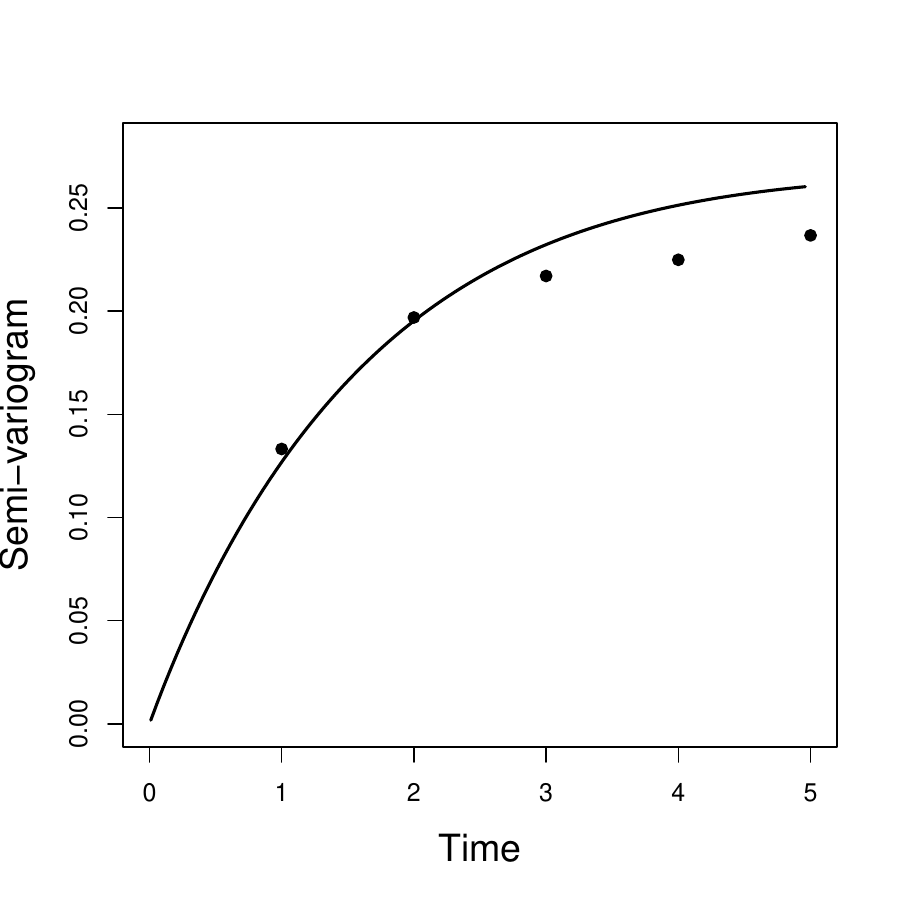}  \\
\end{tabular}
\end{center}
\caption{
 From left to right: Empirical spatial and temporal marginal semi-variograms  of the residuals (dotted points) and estimated theoretical 
 counterparts (solid line).
}\label{fig:res-variog}
\end{figure}

Using the preliminary estimates of the regression parameters
as starting values in the BFGS   \citep{FR:1987} optimization algorithm 
 we have fitted the Weibull and Log-Gaussian models  with WPL using seven years  (2000-2007).
 The last year has been used  for the evaluation of the  prediction performance using time-forward predictions.
Note that the sample size ($87630$ observations) prevents the use of a full likelihood approach even for the Log-Gaussian model. 

The estimation
 of regression and dependence  parameters for the six models is based on WPL,
with  a cut-off  weight $ c_{ij}$ equal to one  if %
$|t_i- t_j |\le 1$ and zero otherwise.
Table \ref{tab:estimation-results} collects the results of the estimation stage including the standard error estimates obtained 
with a sub-sampling
technique as in \citet{Bevilacqua:Gaetan:Mateu:Porcu:2012}. 
 {As one could}  expect, there is no big difference in trend estimates
among different models and correlation functions.	
However considering the PLIC criterion, our preference goes to the Weibull model with $\phi_{ST}=0$.

For the Weibull case with $\phi_{ST}=0$, using the MWPL estimates of the regression parameters
we first compute the  estimated residuals
and then  we compute the  empirical spatio-temporal semi-variogram of the residuals.
We compare it with the estimated theoretical semi-variogram obtained 
 {plugging  the MWPL estimates} into the theoretical  spatio-temporal  semi-variogram
i.e.  $\gamma_W(h,u)=\sigma_W^2(1-\rho_W(h,u))$
with $\rho_W$ given by:
 \begin{equation}\label{ccc2}
 \rho_{W}(h,u)=\frac{ \nu^{-2}(\kappa)    }{\left[\Gamma\left(1+{2}/{\kappa}\right)-\nu^{-2}(\kappa) \right]}\left[{}_2F_1\left(-{1/\kappa},-{1/\kappa};1;\rho^2(h,u)\right)-1\right].
 \end{equation}
 Figure \ref{fig:res-variog} shows the good agreement of  the estimated  theoretical  spatial and temporal marginal semi-variograms (i.e. the estimation of  $\gamma_W(h,0)$ and  $\gamma_W(0,u)$  respectively)  with the empirical 
 counterparts.

\begin{table}
\begin{center}
\scalebox{0.95}{
\begin{tabular}{|c|c|c|c|c|c|c|c|c|c|}
\hline \multicolumn{1}{|c|}{} & \multicolumn{2}{c|}{$\phi_{ST}=0$} &
\multicolumn{2}{c|}{$\phi_{ST}=0.5$} &\multicolumn{2}{c|}{$\phi_{ST}=1$}\\
\cline{1-7}
&  Weibull & Log-Gaussian & Weibull &  Log-Gaussian &  Weibull &  Log-Gaussian \\
\hline
${\beta_0}$&$-0.0222$&$0.0166$&$-0.0222$&$0.0168$&$-0.0221$&$0.0170$\\
&$(0.0026)$&$(0.0015)$&$(0.0026)$&$(0.0015)$&$(0.0026)$&$(0.0016)$\\
\hline
${\beta}_{1,1}$&$0.0747$&$0.0787$&$0.0747$&$0.0787$&$0.0747$&$0.0787$\\
&$(0.0025)$&$(0.0024)$&$(0.0025)$&$(0.0026)$&$(0.0025)$&$(0.0029)$\\
\hline
${\beta}_{2,1}$&$0.1822$&$0.1995$&$0.1822$&$0.1996$&$0.1822$&$0.1996$\\
&$(0.0030)$&$(0.0028)$&$(0.0029)$&$(0.0028)$&$(0.0029)$&$(0.0030)$\\
\hline

${\beta}_{1,2}$&$-0.0087$&$-0.0270$&$-0.0087$&$-0.0270$&$-0.0087$&$-0.0270$\\

&$(0.0567)$&$(0.0192)$&$(0.0566)$&$(0.0190)$&$(0.0566)$&$(0.0200)$\\

\hline

${\beta}_{2,2}$&$0.0138$&$0.0107$&$0.0138$&$0.0107$&$0.0137$&$0.0107$\\

&$(0.0306)$&$(0.0489)$&$(0.0306)$&$(0.0484)$&$(0.0306)$&$(0.0509)$\\

\hline

${\beta}_{1,3}$&$0.0274$&$0.0237$&$0.0274$&$0.0237$&$0.0274$&$0.0237$\\

&$(0.0192)$&$(0.0229)$&$(0.0192)$&$(0.0224)$&$(0.0192)$&$(0.0234)$\\

\hline

${\beta}_{2,3}$&$-0.0339$&$-0.0519$&$-0.0338$&$-0.0519$&$-0.0338$&$-0.0519$\\

&$(0.0101)$&$(0.0110)$&$(0.0100)$&$(0.0110)$&$(0.0100)$&$(0.0116)$\\

\hline

${\beta}_{1,4}$&$0.0093$&$0.0273$&$0.0093$&$0.0273$&$0.0093$&$0.0273$\\

&$(0.0548)$&$(0.0215)$&$(0.0548)$&$(0.0213)$&$(0.0548)$&$(0.0224)$\\

\hline

${\beta}_{2,4}$&$0.0042$&$0.0110$&$0.0042$&$0.0110$&$0.0042$&$0.0110$\\

&$(0.1238)$&$(0.0526)$&$(0.1238)$&$(0.0522)$&$(0.1238)$&$(0.0549)$\\

\hline
${\kappa}$&$2.0265$&$$&$2.0264$&$$&$2.0263$&\\

&$(0.0264)$&$$&$(0.0257)$&$$&$(0.0255)$&\\

\hline

\multirow{2}{*}{${\sigma}^2$}&$$&$0.3855$&$$&$0.3858$&$$&$0.3862$\\

&$$&$(0.0009)$&$$&$(0.0009)$&$$&$(0.0011)$\\

\hline

\multirow{2}{*}{${\phi}_S$}&$4067.21$&$1066.277$&$4071.738$&$1072.0239$&$4076.578$&$1078.6964$\\

&$(89.2924)$&$(3.4777)$&$(61.9349)$&$(3.4782)$&$(50.1251)$&$(3.3496)$\\

\hline

\multirow{2}{*}{${\phi}_T$}&$12.2794$&$4.9687$&$12.4249$&$5.1731$&$ 12.5715$&$5.3820$\\

&$(0.4035)$&$(0.0480)$&$(0.4057)$&$(0.0529)$&$(0.4080)$&$(0.0532)$\\
\hline\hline
PLIC&$8864239$&$10392021$&$8864463$&$10392832$&$8864821$&$10405428$\\
\hline
\end{tabular}
}
\end{center}
\caption{MWPL estimates for Weibull and Log-Gaussian models for the correlation model (\ref{eq:space-time-corr}). The standard error of the estimates are reported between the parentheses.} \label{tab:estimation-results}
\end{table}

We want to further evaluate the predictive performances of the proposed model 
by considering one-day ahead  predictions for the wind speed at the thirty meteorological stations but we have  limited the number of predictor variables  due to the computational load. Specifically, the predictor variables are the 150 wind speeds  observed during the past five days at the  stations. 

For the Weibull models,
we used the  simple kriging
predictor (\ref{pred-sk}). For the Log-Gaussian models, we have chosen the conditional expectation given the past observations \citep[formula 2]{DeOliveira:2006}.
In both cases, the predictions are obtained by plugging in the estimated parameters in the formulas.
As benchmark, we have also considered the na\"ive predictor $\widehat{Y}(s_i,t)=y(s_i,t-1)$,  that uses the   observation recorded the day before at the station.
The prediction performances are compared looking to the  root-mean-square prediction error (RMSE)  and the mean absolute prediction error (MAE).

In addition, we  considered the sample mean of the continuous ranked
probability score (CRPS)
 to evaluate the marginal  predictive distribution performance
\citep{Gneiting:Raftery:2007}. For a single predictive cumulative distribution function $F$ and a verifying observation $y$, the score  is defined as
$$\mathrm{CRPS}(F_,y)=\int\limits_{-\infty}^{\infty}(F(t)-\bm{1}_{[y,\infty]}(t))^2dt.
$$
For  a Weibull distribution, we have derived an analytical expression  of the corresponding score (see the Appendix) which turns out to be,
under our parametrization:
$$ \mathrm{CRPS}_W(F_{\kappa,\mu\nu(\kappa)}
,y)=y\left\{2\,(1-\exp\{-(y/\mu\nu(\kappa))^{\kappa}\}-1\right\}+
2\mu\left[2^{-1/k}-\nu(\kappa)\gamma\left(1+\frac{1}{\kappa},\frac{y^{\kappa}}{(\mu\nu(\kappa))^\kappa}\right)\right],
$$
where  $\gamma (s,z)=\int _{0}^{z}t^{s-1}\,\mathrm {e} ^{-t}\,{\rm {d}}t$ is the lower Gamma incomplete function.

\citet{Sandor:Leirch:2015} derived  the corresponding one for a Log-Gaussian random variable
$Y=\exp(\alpha+\beta Z)$ with cdf $F_{\alpha,\beta}$,
where $Z$ is  standard Gaussian random variable. Under our parametrization:
$$\mathrm{CRPS}_{LG}(F_{\mu-\sigma^2/2,\sigma},y)=
y\left[2\Phi(l(y)-1)\right]+2e^{\mu}\left[1-\Phi\left(\sigma/\sqrt{2}\right)-\Phi\left(l(y)-\sigma\right)\right],
$$
where $\Phi(\cdot)$ is the CDF of the standard Gaussian distribution and $l(y)=[\log(y)-(\mu-\sigma^2/2)]/\sigma$.

As a general consideration  the prediction based on a model (Weibull or Log-Gaussian) outclasses always the na\"ive prediction (see Table \ref{tab:prediction}).  Moreover, even though the simple kriging predictor is a suboptimal solution, the  Weibull model   outperforms  the Log-Gaussian model in terms of RMSE, MAE. Finally 
 also the CRPS of the Weibull model  outperform the Log-Gaussian model. Note that CRPS values do not  dependent on $\phi_{ST}$. This is not surprising since 
 the estimated marginal parameters for both models are very similar for $\phi_{ST}=0, 0.5, 1$.
 
Among the fitted covariance models, we give again a  preference to the correlation function  with $\phi_{ST}=0$.

\begin{table}[th!]
\begin{center}
\scalebox{1}{
\begin{tabular}{c|ccc|ccc|ccc|}
\cline{2-10}%
&\multicolumn{3}{c|}{$\phi_{ST}=0$} &\multicolumn{3}{c|}{$\phi_{ST}=0.5$}&\multicolumn{3}{c|}{$\phi_{ST}=1$}\\
\cline{2-10}
&RMSE    &   MAE  & CRPS &    RMSE &      MAE & CRPS&    RMSE &      MAE&CRPS\\
\cline{2-10}
W &  0.4461& 0.3486&0.3057 &0.4469 & 0.3491 &0.3057& 0.4502&0.3503&0.3057
\\
LG & 0.4517 & 0.3555&0.3068 & 0.4555 & 0.3585&0.3068 & 0.4611 & 0.3629&0.3068\\
\cline{2-10}%
\cline{2-3}
Na\"ive& \multicolumn{9}{c|}{MAE=  0.5137, RMSE= 0.4021}\\
\cline{2-10} 
\end{tabular}}
\caption{Preditiction performances for the Weibull and Log-Gaussian models for different space time interaction.}\label{tab:prediction}
\end{center}
\end{table}

\section{Concluding remarks}\label{sec:conclusions}

Motivated by a spatio-temporal analysis of  daily wind speed data 
from a network of meteorological stations in the Netherlands,
 {we proposed}  a non-stationary stochastic process  with Weibull marginal distributions  for  regression and dependence analysis when dealing with  positive continuous data.
In contrast to a Gaussian copula or, more  {generally}, to monotonic  transformations of a Gaussian process, 
our model offers a workable solution in the presence of different  dependence in the lower and upper distribution tails, $i.e.$ reflection asymmetry.

Additionally, we have shown that nice properties such as stationarity, mean-square
continuity and degrees of mean-square differentiability are
inherited  from the `parent' Gaussian random process. However,
 discontinuity of the paths can be easily induced by choosing a discontinuous correlation function for the 'parent' Gaussian process.

We also remark that even though we have  {limited  ourselves} to  continuous Euclidean space, our models can be extended
 to  a spherical domain  \citep{gneiting2013,PBG:2016} or to  a network space. In this respect the $X_m$ random process should represent a generalization of the  model in \citet{Warren:1992}.

A common drawback for the proposed model  is the lack of an amenable expression of the density outside of the bivariate case
that prevents an inference approach based on  likelihood methods
 and the derivation of an optimal predictor that minimizes  the mean square prediction error.
We have shown  with some numerical experiments that an inferential approach based on the
pairwise likelihood is an effective  solution for estimating the unknown parameter.
On the other hand  probabilities of multivariate events could be evaluated by Monte Carlo method since the random processes can be quickly simulated. However, our solution to the conditional prediction, based on a linear predictor,  is limited and deserves further consideration even if, in our simulations and real data example, it has been performed well.

\section*{Acknowledgements}
Partial support was provided by FONDECYT grant 1200068, Chile
and  by Millennium
Science Initiative of the Ministry
of Economy, Development, and
Tourism, grant "Millenium
Nucleus Center for the
Discovery of Structures in
Complex Data"
for Moreno Bevilacqua and by Proyecto de Iniciaci\'on Interno
DIUBB 173408 2/I de la Universidad del B\'io-B\'io for Christian Caama\~no.
\newpage
\bibliographystyle{biom} 
\bibliography{newbib}
\newpage
\appendix
\section*{Appendix}

In the sequel we will exploit   the identity for the 
hypergeometric function ${}_0F_1$,	
$${}_0F_1(;b;x)=\Gamma(b)x^{(1-b)/2}I_{b-1}(2\sqrt{x}).$$ 
where $I_{a}(x)$  is the modified Bessel function of the
first kind of order $a$.

\begin{proposition}\label{def:prop1}
	The $(a,b)-th$ product moment of
	 any pairs $W_1:=W(s_1)$ and $W_2:=W(s_2)$ is given by
	
	\begin{eqnarray}\label{momentweib}
	\E({W}_1^a\,{W}_2^b)&=&\frac{	\Gamma\left(1+{a}/{\kappa}\right)\Gamma\left(1+{b}/{\kappa}\right)}{\Gamma\left(1+{1/\kappa}\right)^{a+b}}
	{}_2F_1\left(-{a}/{\kappa},-{b}/{\kappa};1;
	\rho^2\right)
	\end{eqnarray}
	where $\rho=\rho(s_1-s_2)$
\end{proposition}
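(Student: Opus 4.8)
The plan is to reduce the claim to a joint-moment computation for the Kibble bivariate Gamma law with $m=2$, and then to identify the resulting power series as a Gauss hypergeometric function via an Euler transformation. First I would invoke the defining relation (\ref{def:weibull}): since $W_i=\nu(\kappa)X_2(s_i)^{1/\kappa}$, we have $\E(W_1^aW_2^b)=\nu(\kappa)^{a+b}\,\E\!\left(X_2(s_1)^{a/\kappa}X_2(s_2)^{b/\kappa}\right)$, so it suffices to compute $\E(X_1^pX_2^q)$ with $p=a/\kappa$, $q=b/\kappa$ for the pair $(X_1,X_2):=(X_2(s_1),X_2(s_2))$ with density (\ref{eq:kibble}) at $m=2$. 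Setting $m=2$ there collapses every prefactor (note $\Gamma(1)=1$ and the exponents $1-m/2$ and $m/2-1$ vanish), leaving the simple form $f_{X_2}(x_1,x_2)=(1-\rho^2)^{-1}\exp\{-(x_1+x_2)/(1-\rho^2)\}\,I_0\!\left(2|\rho|\sqrt{x_1x_2}/(1-\rho^2)\right)$.

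Next I would substitute the power series $I_0(z)=\sum_{j=0}^\infty (z/2)^{2j}/(j!)^2$, interchange sum and double integral (legitimate by Tonelli, since every term is nonnegative), and evaluate the two resulting univariate integrals with $\int_0^\infty x^{r}e^{-x/(1-\rho^2)}\,dx=\Gamma(r+1)(1-\rho^2)^{r+1}$. Collecting powers of $(1-\rho^2)$ and using $\Gamma(p+1+j)=\Gamma(p+1)(p+1)_j$ (and likewise for $q$) gives $\E(X_1^pX_2^q)=(1-\rho^2)^{p+q+1}\Gamma(p+1)\Gamma(q+1)\sum_{j=0}^\infty\frac{(p+1)_j(q+1)_j}{(j!)^2}\rho^{2j}=(1-\rho^2)^{p+q+1}\Gamma(p+1)\Gamma(q+1)\,{}_2F_1(p+1,q+1;1;\rho^2)$.

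Finally I would apply Euler's transformation ${}_2F_1(\alpha,\beta;\gamma;z)=(1-z)^{\gamma-\alpha-\beta}{}_2F_1(\gamma-\alpha,\gamma-\beta;\gamma;z)$ with $\alpha=p+1$, $\beta=q+1$, $\gamma=1$, which turns ${}_2F_1(p+1,q+1;1;\rho^2)$ into $(1-\rho^2)^{-(p+q+1)}\,{}_2F_1(-p,-q;1;\rho^2)$; the stray factor $(1-\rho^2)^{p+q+1}$ then cancels exactly, leaving $\E(X_1^pX_2^q)=\Gamma(p+1)\Gamma(q+1)\,{}_2F_1(-p,-q;1;\rho^2)$. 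Substituting $p=a/\kappa$, $q=b/\kappa$ and $\nu(\kappa)=\Gamma(1+1/\kappa)^{-1}$ yields (\ref{momentweib}). The routine parts are the Gamma integrals and the Pochhammer bookkeeping; the genuinely substantive steps are the simplification of (\ref{eq:kibble}) at $m=2$ and the recognition at the end, since without Euler's transformation one is left with the ``wrong-looking'' series ${}_2F_1(1+a/\kappa,1+b/\kappa;1;\rho^2)$ and must see that, up to the compensating power of $1-\rho^2$, it coincides with the stated ${}_2F_1(-a/\kappa,-b/\kappa;1;\rho^2)$. (The sign of $\rho$ is immaterial throughout, as only $\rho^2$ enters the series, $I_0$ being even.)
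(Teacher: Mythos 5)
Your proof is correct and follows essentially the same route as the paper's: expand the Bessel factor $I_0={}_0F_1(;1;\cdot)$ as a power series, integrate term by term to produce the series ${}_2F_1(1+a/\kappa,1+b/\kappa;1;\rho^2)$ with the compensating power of $1-\rho^2$, and finish with Euler's transformation. The only cosmetic difference is that you first pass to the Gamma (Kibble, $m=2$) scale via $W=\nu(\kappa)X_2^{1/\kappa}$ and integrate there, whereas the paper integrates directly against the bivariate Weibull density (\ref{eq:weibull}); the two computations are related by an elementary change of variables.
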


\begin{proof}
Using the series expansion of hypergeometric function ${}_0F_1$,	we have:
\begin{eqnarray}\label{cal12}
	\E({W}_1^a\,{W}_2^b)&=&
	\frac{\kappa^{2}\Gamma\left(1+{1/\kappa}\right)^{2\kappa}}{1-\rho^2}
	\int\limits_{0}^\infty\int\limits_{0}^\infty {u}^{\kappa+a-1}{v}^{\kappa+b-1}
	\exp\left\{-\frac{\Gamma\left(1+{1/\kappa}\right)^{\kappa}}{(1-\rho^2)}
	\left({u}^{\kappa}+
	{v}^{\kappa}
	\right)
	\right\}
	\nonumber\\
		&&
\quad \times{}\,_0F_1\left(1;\frac{\rho^2({u}{v})^{\kappa}\Gamma\left(1+{1/\kappa}\right)^{2\kappa}}{(1-\rho^2)^2}\right)d{u}d{v}\nonumber\\
	&=&\frac{\kappa^{2}\Gamma\left(1+{1/\kappa}\right)^{2\kappa}}{1-\rho^2}
	\sum\limits_{m=0}^{\infty}\quad \frac{1}{ { {(m!)^2}} }
	\left(\frac{\rho^2\Gamma\left(1+{1/\kappa}\right)^{2\kappa}}{(1-\rho^2)^2}\right)^m \nonumber\\
		&&\times
	\int\limits_{0}^\infty\int\limits_{0}^\infty
	\exp\left\{-\frac{\Gamma\left(1+{1/\kappa}\right)^{\kappa}}{(1-\rho^2)}
	\left({u}^{\kappa}+ v^{\kappa}
	\right)
	\right\}d{u}d{v}
	\nonumber\\
	&=&\frac{\kappa^{2}\Gamma\left(1+{1/\kappa}\right)^{2\kappa}}{1-\rho^2}
	\sum\limits_{m=0}^{\infty}\frac{I(m)}{ {(m!)^2}}\left(\frac{\rho^2\Gamma\left(1+{1/\kappa}\right)^{2\kappa}}{(1-\rho^2)^2}\right)^m
	\end{eqnarray}
Using Fubini's Theorem and (3.381.4) in
	\cite{Gradshteyn:Ryzhik:2007}, we obtain
	\begin{eqnarray}\label{res112}
	I(m)&=&\int\limits_{0}^{\infty}{u}^{\kappa+a+m\kappa-1}\
	\exp\left\{-\frac{\Gamma\left(1+{1/\kappa}\right)^{\kappa}}{(1-\rho^2)}u^{\kappa}\right\}
	d{u}
	\nonumber\\
	&&\times
	\int\limits_{0}^{\infty}{v}^{\kappa+b+m\kappa-1}
	\exp\left\{-\frac{\Gamma\left(1+{1/\kappa}\right)^{\kappa}}{(1-\rho^2)}v^{\kappa}\right\}
	d{v}
	\nonumber\\
	&=&\kappa^{-2}\Gamma\left(1+{a}/{\kappa}+m\right)\Gamma\left(1+{b}/{\kappa}+m\right)
	\nonumber\\
	&&
	\times
	\left(\frac{1-\rho^2}{\Gamma\left(1+{1/\kappa}\right)^{\kappa}}\right)^{1+{a}/{\kappa}+m}
	\left(\frac{1-\rho^2}
	{\Gamma\left(1+{1/\kappa}\right)^{\kappa}}\right)^{1+{b}/{\kappa}+m}
	\end{eqnarray}
	Combining equations  (\ref{cal12}) and (\ref{res112}), we
	obtain
	\begin{eqnarray*}
		\E({W}_1^a\,{W}_2^b)&=&
		\frac{(1-{\rho^2})^{1+(a+b)/\kappa}\Gamma\left(1+{a}/{\kappa}\right)\Gamma\left(1+{b}/{\kappa}\right)}{\Gamma\left(1+{1/\kappa}\right)^{a+b}}
		 \\
		&&\times\,
		{}_2F_1\left(1+{a}/{\kappa},1+{b}/{\kappa};1;{\rho^2}\right)\\
	\end{eqnarray*}
		Finally, using Euler transformation, we obtain (\ref{momentweib}).
\end{proof}

\begin{proposition}\label{def:prop2}
Let $s_1<s_2<\cdots< s_{n}<s_{n+1}$, with $s_i \in \R$.
	For the Weibull process $Y$ with underlying  exponential correlation function, the conditional expectation of $Y^a(s_{n+1})$,  $a>0$, given 
	$Y(s_{1})=y_1,\ldots,Y(s_{n})=y_n)$ is
\begin{equation*}
\begin{split}
\E(Y^{a}(s_{n+1})|Y(s_{1})=y_1,\ldots,Y(s_{n})=y_n)&=
\Gamma\left(\frac{a}{\kappa}+1\right)(1-\rho^2_{n,n+1})^{a/\kappa}[\nu(\kappa)\mu_{n+1}]^{a}\\
&\times \exp\left\{-\frac{y^{\kappa}_n}{(1-\rho^2_{n-1,n})[\nu(\kappa)\mu_n]^\kappa} 
\left[
\frac{   (1-\rho^2_{n-1,n} \rho^2_{n,n+1})}{(1-\rho^2_{n,n+1})}-1 \right]
\right\}\\
&\times
{}_1F_1\left(\frac{a}{\kappa}+1;1;\frac{\rho^2_{n,n+1}y_n^{\kappa}}{[\nu(\kappa)\mu_n]^{\kappa}(1-\rho^2_{n,n+1})}\right)
\end{split}
\end{equation*}
\end{proposition}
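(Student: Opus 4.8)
The plan is to reduce the conditional expectation to a one-dimensional integral against the conditional density of $X_2(s_{n+1})$ and to evaluate that integral with the series identity for ${}_0F_1$ recorded at the beginning of the Appendix. First I would pass from $Y$ to the parent scaled $\chi^2$ process: since $Y(s)=\mu(s)\nu(\kappa)X_2(s)^{1/\kappa}$ is, for each fixed $s$, a strictly increasing function of $X_2(s)$, conditioning on $Y(s_1)=y_1,\ldots,Y(s_n)=y_n$ is the same as conditioning on $X_2(s_i)=x_i$ with $x_i:=\{y_i/(\nu(\kappa)\mu_i)\}^{\kappa}$, and $\E(Y^{a}(s_{n+1})\mid\cdot)=[\nu(\kappa)\mu_{n+1}]^{a}\,\E(X_2(s_{n+1})^{a/\kappa}\mid X_2(s_i)=x_i,\ i\le n)$. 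So it suffices to compute the latter conditional moment.

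Next I would obtain the conditional density $f(x_{n+1}\mid x_1,\ldots,x_n)$ as the ratio $f_{X_2}(x_1,\ldots,x_{n+1})/f_{X_2}(x_1,\ldots,x_n)$, reading both factors off (\ref{gammafd1}) with $m=2$ (so the Bessel functions have order $I_0$, the constant in front is $1$, and the powers $(x_1x_n)^{m/4-1/2}$ and $|\rho|^{m/2-1}$ are trivial). Every factor depending on $x_1,\ldots,x_{n-1}$ cancels in the ratio; what survives is the prefactor $(1-\rho^2_{n,n+1})^{-1}$, the factor $\exp(-x_{n+1}/(1-\rho^2_{n,n+1}))$, the single Bessel term $I_0(2|\rho_{n,n+1}|\sqrt{x_n x_{n+1}}/(1-\rho^2_{n,n+1}))$, and the change in the coefficient of $x_n$, which in the denominator density sits at the boundary with coefficient $(1-\rho^2_{n-1,n})^{-1}$ but in the numerator density sits in the interior with coefficient $(1-\rho^2_{n-1,n}\rho^2_{n,n+1})/\{(1-\rho^2_{n-1,n})(1-\rho^2_{n,n+1})\}$. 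The difference of these two contributions is exactly the factor $\exp\{-x_n[(1-\rho^2_{n-1,n}\rho^2_{n,n+1})/(1-\rho^2_{n,n+1})-1]/(1-\rho^2_{n-1,n})\}$ appearing in the statement (it collapses to $\exp(-\rho^2_{n,n+1}x_n/(1-\rho^2_{n,n+1}))$, so $\rho_{n-1,n}$ could be eliminated); equivalently this is just the Markov property of the exponential-correlation chain together with the $m=2$ Kibble bivariate density (\ref{eq:kibble}).

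Finally I would evaluate $\int_0^\infty x_{n+1}^{a/\kappa}f(x_{n+1}\mid x_1,\ldots,x_n)\,dx_{n+1}$ by pulling the $x_n$-only exponential factor out of the integral, writing $I_0(2\sqrt{y})={}_0F_1(;1;y)$ via the Appendix identity, expanding the resulting power series, and integrating term by term (each term an elementary Gamma integral, as in the proof of Proposition \ref{def:prop1}). The sum reassembles into $\Gamma(a/\kappa+1)(1-\rho^2_{n,n+1})^{a/\kappa}\,{}_1F_1(a/\kappa+1;1;\rho^2_{n,n+1}x_n/(1-\rho^2_{n,n+1}))$ times the carried-over exponential; multiplying by $[\nu(\kappa)\mu_{n+1}]^{a}$ and substituting $x_n=\{y_n/(\nu(\kappa)\mu_n)\}^{\kappa}$ gives the claimed expression. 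I expect the only delicate step to be the bookkeeping in the density ratio — tracking which factors telescope and exactly how the coefficient of $x_n$ changes when $s_n$ ceases to be the terminal point; the termwise integration that follows is routine and mirrors the computation used for Proposition \ref{def:prop1}.
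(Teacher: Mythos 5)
Your proposal is correct and follows essentially the same route as the paper's proof: obtain the conditional density of the terminal coordinate from the ratio of the $m=2$ multivariate densities in (\ref{gammafd1}) (equivalently, via the Markov property, from the Kibble bivariate density), then expand $I_0$ through the ${}_0F_1$ series and integrate term by term to reassemble a ${}_1F_1$, exactly as in the proof of Proposition \ref{def:prop1}. Your additional observation that the carried-over exponential factor collapses to $\exp\{-\rho^2_{n,n+1}x_n/(1-\rho^2_{n,n+1})\}$, so that $\rho_{n-1,n}$ drops out, is a correct simplification the paper does not record.
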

\begin{proof}

First,
note that using   (\ref{qqq}), the  density of the random variable 
$Y(s_{n+1})|(Y(s_{1})=y_1,\ldots,Y(s_{n})=y_n)$
is easily obtained as:

\begin{equation*}
\begin{split}
f(y_{n+1}|y_1,\ldots,y_n)&=\frac{\kappa y_{n+1}^{\kappa-1}}{\nu^{ \kappa}(\kappa)\mu_{n+1}^{\kappa}(1-\rho^2_{n,n+1})}
\exp\left\{-\frac{1}{(1-\rho^2_{n,n+1})}\left[\frac{y_{n+1}}{\nu(\kappa)\mu_{n+1}}\right]^{\kappa}\right\}\\
&\times \exp\left\{-\frac{y^{\kappa}_n}{(1-\rho^2_{n-1,n})[\nu(\kappa)\mu_n]^\kappa} 
\left[
\frac{(1-\rho^2_{n-1,n}\rho^2_{n,n+1})}{(1-\rho^2_{n,n+1})}-1\right]\right\}\\
&\times I_{0}\left(\frac{2|\rho_{n,n+1}|(y_ny_{n+1})^{\kappa/2}}{\nu^{\kappa}(\kappa)(\mu_n\mu_{n+1})^{\kappa/2}(1-\rho^2_{n,n+1})}\right).
\end{split}
\end{equation*}

Using the series expansion of
	hypergeometric function ${}_0F_1$, we obtain:
\begin{equation}\label{expa}
\begin{split}
\E(Y^{a}(s_{n+1})&|Y(s_{1})=y_1,\ldots,Y(s_{n})=y_n)=\frac{\kappa}{\nu^{ \kappa}(\kappa)\mu_{n+1}^{\kappa}(1-\rho^2_{n,n+1})}\\
&\times \exp\left\{-\nu^{-\kappa}(\kappa)\left[
\frac{(1-\rho^2_{n-1,n}\rho^2_{n,n+1})y_n^{\kappa}}{\mu_{n}^{\kappa}(1-\rho^2_{n-1,n})(1-\rho^2_{n,n+1})}-\frac{y^{\kappa}_n}{\mu_{n}^{\kappa}(1-\rho^2_{n-1,n})}\right]\right\}\\
&\times\int\limits^{\infty}_{0}y_{n+1}^{\kappa+a-1}e^{-\frac{1}{(1-\rho^2_{n,n+1})}\left[\frac{y_{n+1}}{\nu(\kappa)\mu_{n+1}}\right]^{\kappa}}
{}_0F_1\left(;1;\frac{\rho^2_{n,n+1}(y_ny_{n+1})^{\kappa}}{\nu^{2\kappa}(\kappa)(\mu_n\mu_{n+1})^{\kappa}(1-\rho^2_{n,n+1})^2}\right)dy_{n+1}\\
&=\frac{\kappa}{\nu^{ \kappa}(\kappa)\mu_{n+1}^{\kappa}(1-\rho^2_{n,n+1})}
\\
&\times \exp\left\{-\nu^{-\kappa}(\kappa)\left[
\frac{(1-\rho^2_{n-1,n}\rho^2_{n,n+1})y_n^{\kappa}}{\mu_{n}^{\kappa}(1-\rho^2_{n-1,n})(1-\rho^2_{n,n+1})}-\frac{y^{\kappa}_n}{\mu_{n}^{\kappa}(1-\rho^2_{n-1,n})}\right]\right\}\\
&\times \sum\limits_{m=0}^{\infty}\frac{I(m)}{{ {(m!)^2}}}\left(\frac{\rho^2_{n,n+1}y_n^{\kappa}}{\nu^{2\kappa}(\kappa)(\mu_n\mu_{n+1})^{\kappa}(1-\rho^2_{n,n+1})^2}\right)^m
\end{split}
\end{equation}
where
\begin{eqnarray}\label{expb}
I(m)&=&\int\limits^{\infty}_{0}
y_{n+1}^{\kappa+a+\kappa m-1}
\exp\left\{-\frac{1}{(1-\rho^2_{n,n+1})}\left[\frac{y_{n+1}}{\nu(\kappa)\mu_{n+1}}\right]^{\kappa}\right\}dy_{n+1}\nonumber\\
&=&\kappa^{-1}[(1-\rho^2_{n,n+1})\nu^{\kappa}(\kappa)\mu^{\kappa}_{n+1}]^{a/\kappa+m+1}\Gamma\left(\frac{a}{\kappa}+m+1\right)
\end{eqnarray}
Combining equations (\ref{expa}) and (\ref{expb}), we obtain the conditional expectation in  proposition \ref{def:prop2}.
\end{proof}

\begin{proposition}\label{def:prop3}
	The $\operatorname{CRPS}$ associated with the  $Weibull(\alpha,\beta)$ distribution %
	is given by
	\begin{eqnarray}\label{crpsweib} \operatorname{CRPS}(F_{\alpha,\beta},y)=y[2F_{\alpha,\beta}(y)-1]-2\beta\gamma\left(1+\frac{1}{\alpha},\frac{y^{\alpha}}{\beta^{\alpha}}\right)+2^{-1/\alpha}\beta\Gamma\left(1+\frac{1}{\alpha}\right)
	\end{eqnarray}
	where $F_{\alpha,\beta}(y)=1-\exp^{-(y/ \beta)^{\alpha}}$ and 
	${\displaystyle \gamma (s,z)=\int _{0}^{z}t^{s-1}\,\mathrm {e} ^{-t}\,{\rm {d}}t,\,s>0}$
	is the lower incomplete gamma function.
\end{proposition}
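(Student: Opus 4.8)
The plan is to evaluate the defining integral
$\operatorname{CRPS}(F_{\alpha,\beta},y)=\int_{-\infty}^{\infty}\bigl(F_{\alpha,\beta}(t)-\bm{1}_{[y,\infty)}(t)\bigr)^2\,dt$
directly, exploiting that the integrand is piecewise of a simple form. Since $F_{\alpha,\beta}$ is supported on $[0,\infty)$ and $y>0$, the integrand vanishes on $(-\infty,0)$, equals $F_{\alpha,\beta}(t)^2$ on $[0,y)$, and equals $\bigl(1-F_{\alpha,\beta}(t)\bigr)^2=\ee^{-2(t/\beta)^{\alpha}}$ on $[y,\infty)$. Thus
\[
\operatorname{CRPS}(F_{\alpha,\beta},y)=\int_0^{y}\bigl(1-\ee^{-(t/\beta)^{\alpha}}\bigr)^2\,dt+\int_y^{\infty}\ee^{-2(t/\beta)^{\alpha}}\,dt .
\]
Expanding the square in the first integral as $1-2\ee^{-(t/\beta)^{\alpha}}+\ee^{-2(t/\beta)^{\alpha}}$, the two $\ee^{-2(t/\beta)^{\alpha}}$ contributions recombine into a single integral over $[0,\infty)$, leaving
\[
\operatorname{CRPS}(F_{\alpha,\beta},y)=y-2\int_0^{y}\ee^{-(t/\beta)^{\alpha}}\,dt+\int_0^{\infty}\ee^{-2(t/\beta)^{\alpha}}\,dt .
\]

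The next step is to reduce the two surviving integrals to Gamma functions via the substitution $u=(t/\beta)^{\alpha}$ (and $u=2(t/\beta)^{\alpha}$ for the second). This yields $\int_0^{\infty}\ee^{-2(t/\beta)^{\alpha}}\,dt=\beta\alpha^{-1}2^{-1/\alpha}\Gamma(1/\alpha)=2^{-1/\alpha}\beta\,\Gamma(1+1/\alpha)$, which is already the last term of (\ref{crpsweib}), and $\int_0^{y}\ee^{-(t/\beta)^{\alpha}}\,dt=\beta\alpha^{-1}\gamma\bigl(1/\alpha,\,(y/\beta)^{\alpha}\bigr)$ by the definition of the lower incomplete Gamma function.

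The only slightly delicate point is purely cosmetic: the target formula is phrased with $\gamma(1+1/\alpha,\cdot)$ rather than $\gamma(1/\alpha,\cdot)$. I would close that gap with the incomplete-Gamma recurrence $\gamma(s+1,z)=s\,\gamma(s,z)-z^{s}\ee^{-z}$, used with $s=1/\alpha$ and $z=(y/\beta)^{\alpha}$, which gives $\beta\alpha^{-1}\gamma(1/\alpha,(y/\beta)^{\alpha})=\beta\,\gamma\bigl(1+1/\alpha,(y/\beta)^{\alpha}\bigr)+y\,\ee^{-(y/\beta)^{\alpha}}$. Substituting this back and using $\ee^{-(y/\beta)^{\alpha}}=1-F_{\alpha,\beta}(y)$, the elementary terms $y-2y\ee^{-(y/\beta)^{\alpha}}$ collapse to $y\bigl(2F_{\alpha,\beta}(y)-1\bigr)$, and (\ref{crpsweib}) drops out. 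Everything after the first display is routine one-variable calculus, so I do not anticipate a genuine obstacle; the recurrence-and-rewrite bookkeeping in this last step is the only place where care is needed.
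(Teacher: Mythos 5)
Your proof is correct, and it takes a genuinely different route from the paper's. You evaluate the defining Brier-type integral $\int(F_{\alpha,\beta}(t)-\bm{1}_{[y,\infty)}(t))^2\,dt$ directly by splitting at $0$ and $y$, recombining the $\ee^{-2(t/\beta)^{\alpha}}$ pieces into a single full-line integral, and then converting to (incomplete) Gamma functions by substitution plus the recurrence $\gamma(s+1,z)=s\,\gamma(s,z)-z^{s}\ee^{-z}$; I checked each of these steps and they are all sound, including the final collapse of $y-2y\ee^{-(y/\beta)^{\alpha}}$ into $y[2F_{\alpha,\beta}(y)-1]$. The paper instead starts from the kernel representation $\operatorname{CRPS}(F,y)=\E_{F}|Y-y|-\tfrac{1}{2}\E_{F}|Y-Y'|$, computes $\E_{F}|Y-y|$ by integrating against the Weibull density, and obtains $\E_{F}|Y-Y'|$ from the known Gini concentration ratio of the Weibull distribution, $G=1-2^{-1/\alpha}$. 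Your argument buys self-containedness: it needs nothing beyond one-variable calculus and the incomplete-Gamma recurrence, and in particular it does not require justifying the equivalence of the integral definition with the kernel form (which holds only under a finite-first-moment hypothesis that the paper invokes implicitly). The paper's route buys modularity: once the kernel representation is accepted, the second term is read off from a tabulated Gini index, and the same template extends immediately to other families (as the paper does for the Log-Gaussian case). One minor caveat on your side: the case split at $0$ and $y$ presupposes $y>0$, which is harmless here since the verifying observations are positive wind speeds, but worth a half-sentence if you want the statement for arbitrary real $y$.
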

\begin{proof}
We first note that the $\operatorname{CRPS}$ can also be written as
\begin{equation*}
\operatorname{CRPS}(F,y)=\E_{F}|Y-y|-\frac{1}{2}\E_{F}|Y-Y'|
\end{equation*}
where $Y$ and $Y'$ are independent random
variables with cumulative distribution function $F$ and finite first moment.
The first term can be integrated out using
the properties of the Weibull density, yielding
\begin{eqnarray*}
\E_{F}|Y-y|&=& \int\limits_{-\infty}^{y}(y-t)f_{\alpha,\beta}(t)dt-\int\limits^{\infty}_{y}(y-t)f_{\alpha,\beta}(t)dt\nonumber\\
&=& yF_{\alpha,\beta}(y)-\int\limits_{-\infty}^{y}\frac{\alpha}{\beta^{\alpha}}t^{\alpha}\exp\left[-\left(\frac{t}{\beta}\right)^{\alpha}\right]dt
  -y[1-F_{\alpha,\beta}(y)]\nonumber\\
  &+&\int\limits_{y}^{\infty}\frac{\alpha}{\beta^{\alpha}}t^{\alpha}\exp\left[-\left(\frac{t}{\beta}\right)^{\alpha}\right]dt\nonumber\\
&=& y[2F_{\alpha,\beta}(y)-1]-\beta\gamma\left(1+\frac{1}{\alpha},\frac{y^{\alpha}}{\beta^\alpha}\right)+\beta\Gamma\left(1+\frac{1}{\alpha},\frac{y^{\alpha}}{\beta^{\alpha}}\right)
\end{eqnarray*}
where 
${\displaystyle \gamma (s,z)=\int _{0}^{z}t^{s-1}\,\mathrm {e} ^{-t}\,{\rm {d}}t,\,s>0}$
is the lower incomplete gamma function and \\
\noindent ${\displaystyle \Gamma(s,z) = \Gamma(s) - \gamma(s, z)}$
is the upper incomplete gamma function. We have:
\begin{equation*}
\E_{F}|X-y|=y[2F_{\alpha,\beta}(y)-1]-\beta\left[2\gamma\left(1+\frac{1}{\alpha},\frac{y^{\alpha}}{\beta^{\alpha}}\right)-\Gamma\left(1+\frac{1}{\alpha}\right)\right]
\end{equation*}
The second term can be calculated  using its relation to the Gini concentration ratio $G$:
\begin{equation*}
\E_{F}|Y-Y'|=\int_{\R_{+}^2}|y-y'|f_{\alpha,\beta}(y)f_{\alpha,\beta}(y')dy\,dy'=2\E(Y)G=2\beta\Gamma\left(1+\frac{1}{\alpha}\right)(1-2^{-1/\alpha})
\end{equation*}
Putting both terms together, we obtain
\begin{equation*}
\operatorname{CRPS}(F_{\alpha,\beta},y)=y[2F_{\alpha,\beta}(y)-1]-2\beta\gamma\left(1+\frac{1}{\alpha},\frac{y^{\alpha}}{\beta^{\alpha}}\right)+2^{-1/k}\beta\Gamma\left(1+\frac{1}{\alpha}\right).
\end{equation*}
\end{proof}

\end{document}